\documentclass[journal,draftcls,onecolumn,12pt]{IEEEtran}
\usepackage{graphicx}
\usepackage[printonlyused]{acronym}
\usepackage{bbm,dsfont,mathrsfs,fixmath}
\usepackage{stmaryrd}
\usepackage{amsmath,amstext,amsfonts,amssymb}
\usepackage{epsfig,float}
\usepackage{upgreek}
\usepackage{graphics}
\usepackage{dsfont}
\usepackage{psfrag}
\usepackage{caption}
\usepackage{subcaption}
\usepackage{url}
\usepackage{shadow,color,pifont,times,rotate}

\newcommand{\mb}{\mathbf}

 \DeclareMathAlphabet{\mathpzc}{OT1}{pzc}{m}{it}
\newtheorem{theorem}{Theorem}

\newtheorem{proposition}{Proposition}
\newtheorem{corollary}{Corollary}
\newtheorem{remark}{Remark}
\newtheorem{definition}{Definition}
\newenvironment{proof}[1][Proof:]{\begin{trivlist}
\item[\hskip \labelsep {\itshape #1}]}{\end{trivlist}}
\usepackage{enumerate}    
\ifCLASSINFOpdf
		\usepackage{epstopdf}
\else
\fi
\usepackage{enumerate}    
\DeclareRobustCommand{\prob}[1][P]{\ensuremath {\mathbb{#1}}}

\def\esp{{\mathbb E}}  
\usepackage{mathtools}   

\DeclarePairedDelimiter{\norm}{\lVert}{\rVert}

\usepackage{array}
\usepackage{pgf,tikz}
\usetikzlibrary{arrows}
\usepackage{pgfplotstable}
\usepackage{rotate}

\begin{document}
%

\acrodef{RSS}{Received Signal Strength}
\acrodef{RSSI}{Received Signal Strength Indicator}
\acrodef{KL}{Kullback-Leibler}
\acrodef{RF}{Radio-Frequency}
\acrodef{UWB}{Ultra-Wide Band}
\acrodef{ToA}{Time of Arrival}
\acrodef{AoA}{Angle of Arrival}
\acrodef{kNN}{k-Nearest Neighbors}
\acrodef{LQI}{Link Quality Information}
\acrodef{BER}{Bit-Error Ratio}
\acrodef{AP}{Access Point}
\acrodef{MAC}{Media Access Control}
\acrodef{FPS}{Fingerprinting Localization Algorithms} 
\acrodef{HT}{hypothesis testing} 

\date{}
\title{ {A Mathematical Model for Fingerprinting-based Localization Algorithms\vspace{-1mm}}}

\author{Arash~Behboodi,~\IEEEmembership{Member,~IEEE,}
         Filip Lemic,~\IEEEmembership{Student Member,~IEEE,} \\
        and~Adam~Wolisz,~\IEEEmembership{Senior Member,~IEEE}
\IEEEcompsocitemizethanks{\IEEEcompsocthanksitem Arash Behboodi is with the Institute for Theoretical Information Technology, RWTH Aachen University, Germany.
\IEEEcompsocthanksitem Filip Lemic and Adam Wolisz are with Telecommunication Networks Group (TKN), Technische Universit\"at Berlin, Germany.}

}

\IEEEtitleabstractindextext{%
\begin{abstract}
A general theoretical framework for \ac{FPS}, given their popularity, can be utilized for their performance studies. In this work, after setting up an abstract model for \ac{FPS}, it is shown that fingerprinting-based localization problem can be cast as a \ac{HT} problem and therefore various results in \ac{HT} literature can be used to provide insights, guidelines and performance bounds for general \ac{FPS}. This framework results in characterization of scaling limits of localization reliability in terms of number of measurements and other environmental parameters. It is suggested that \ac{KL} divergence between probability distributions of selected feature for fingerprinting at  different locations encapsulates  information about both accuracy and latency and can be used as a central performance metric for studying \ac{FPS}. Although developed for an arbitrary fingerprint, the framework is particularly used for studying simple \ac{RSS}-based algorithm. The effect of various parameters on the performance of fingerprinting algorithms is discussed, which includes path loss and fading characteristics, number of measurements at each point, number of anchors and their position, and placement of training points. Representative simulations and experimentation are used to verify validity of the theoretical frameworks in realistic setups. 
\end{abstract}
\begin{IEEEkeywords}
Fingerprinting algorithms, localization, hypothesis testing, large deviation, \ac{RSS}-based localization.
\end{IEEEkeywords}}
\maketitle

\section{Introduction}
\label{introduction}

Precise location of people, equipments, and materials, both indoor and outdoor, is an essential information for future networks as an enabler of context aware services, location aware and pervasive computing, ambient intelligence, and location based services. Variety of localization solutions have been studied and proposed by researchers during years. 
These solutions exploit the spatial dependence of certain properties. For instance, the propagation model of waves in space can be a basis for deriving certain wave characteristics in different locations and then devising algorithms for distinguishing different locations using those characteristics. Ultrasonic, infrared, and \ac{RF} waves are some of the candidates for localization \cite{medina_ultrasound_2013, brassart_localization_2000, erol-kantarci_survey_2011, amundson_survey_2009}, although the later is more popular due to its low cost and availability. \ac{RF}-based solutions leverage available technologies such as IEEE~802.11 (WiFi), IEEE~802.15.4 (ZigBee), IEEE~802.15 (Bluetooth), \ac{UWB}, RFID, and mobile telephony. Different \ac{RF} characteristics can be used for localization including \ac{ToA}, \ac{AoA},\ac{RSS}, and the quality of the RF transmission in digital communication channels (e.g. \ac{LQI}, \ac{BER}). 
These characteristics are then employed in different localization algorithms. 
Roughly one can distinguish three different categories of localization algorithms namely geometry-based, fingerprinting, and Bayesian-based ~\cite{Seco2009}. \ac{RF} Fingerprinting algorithms are particularly attractive because they rely on available wireless infrastructures and do not require costly set-up of a new infrastructure. 

These algorithms are based on the observation that a combination of simple features of different signals in a given location provides a unique signature for that location. The observed feature at each location is used to construct an identification tag for that location which is called the fingerprint. A database is constructed, explicitly or implicitly, by gathering fingerprints of different locations. A pattern matching algorithm  identifies the location by finding the most similar fingerprints in the database to the reported fingerprint from unknown location. 
The signal features should then satisfy some conditions if they are to be used for reliable identification of a location. Firstly, the signal feature should be robust in the localization space. This means that similar fingerprints should correspond to the nearby locations, otherwise a small error in fingerprinting matching can lead to large geometric error in localization. Secondly, the fingerprint should be stable in time, i.e., it should not change significantly from the database creation time to the localization runtime, otherwise the fingerprint of the unknown location can be significantly different from the recorded fingerprint, thereby leading to huge error. The final location estimation includes similarity analysis of fingerprints and optionally post-processing methods such as \ac{kNN}. 
An attractive feature of fingerprinting is that the analytical characterization of the relation between the signal feature and the observer's location is not required. 
This is usually a daunting task as it involves complex characterization of propagation and solving the wave equation for non-homogeneous environments.  
As a matter of fact, the fingerprinting algorithms provide this functional relation between the signal feature and the location through the table of samples recorded in the training database. 
The training database is nothing but samples of the function relating the location in space to the observed feature in the same location.

\vspace{-1mm}
\subsection{Main Contributions}

Most of the existing theoretical results for fingerprinting localization are specific to the chosen signal feature, for instance \ac{RSSI} fingerprinting, and sometimes they do not provide the qualitative characterization for the impact of different parameters on the algorithm design. In this paper, a connection is established between fingerprinting algorithms and hypothesis testing problem. To the best of our knowledge, it is the first time that this connection has been mentioned. In this framework, the signal feature is not specified and it is only assumed to be a random variable with a probability distribution that varies with the location. In this setting, the problem of localization boils down to detecting the probability distribution underlying the observed measurements, which is indeed a hypothesis testing problem. 
Using this formulation, performance limits of fingerprinting algorithms can be characterized using well known results from hypothesis testing. The framework is applicable for a general signal feature which makes it suitable for variety of scenarios. The contributions of this paper are as follows.
Through an abstract formulation, a general framework for theoretical study of fingerprinting algorithms is provided using hypothesis testing problem.
It is shown that there exists a fingerprinting algorithm achieving accurate localization with arbitrarily large training and measurement phase. Moreover the probability of inaccurate localization decays exponentially with  number of measurements. As a result, \ac{KL} divergence between probability distributions of selected feature for fingerprinting at two different locations is suggested as a central metric that encapsulates both accuracy and latency of fingerprinting localization algorithms. 

The introduced framework is discussed for \ac{RSS}-based algorithms which is the most established and promising instance of fingerprinting algorithms. 
The effects of path loss exponent, fading statistics and anchor selection are discussed using this framework. 
These claims are quantitatively verified through simulation and experimental evaluation.

The paper is organized as follows.  Related works are discussed in Section~\ref{relworks}. In Section~\ref{model}, we provide a system model and a formal definition of fingerprinting algorithms. Section~\ref{limits} gives the theoretical limits of fingerprinting algorithms under general assumption about the system. 
In Section~\ref{rss_fingerprinting}, we focus on a theoretical study of a \ac{RSS}-based fingerprinting algorithm. Finally, the theoretical results are verified through simulation and testbed experiments. 

\vspace{-1mm}
 \vspace{-1mm}
\section{Related Works}
\label{relworks}

There are various works in the literature studying different aspects of fingerprinting algorithms~\cite{Milioris2014}. 
A survey of WiFi-based fingerprinting algorithms is provided in~\cite{honkavirta2009comparative} and variants of such algorithms are instantiated to low-dimensional \ac{RSS}~\cite{milioris2011low}, neural network-based clustering~\cite{Laoudias2009}, $K$-means algorithm, clustering, and complexity reduction~\cite{Bai2013} and spatial signal prediction-based training method~\cite{Steiner2011}.

The impact of the number of \acp{AP} on the performance of fingerprinting algorithms has been studied in~\cite{Machaj2010}, where the authors claim that localization error is increased, when the number of \acp{AP} used for constructing training database differs from number of \acp{AP} used in a localization phase. 
In~\cite{Lin2005}, the authors provide a comparison of fingerprinting algorithms based on accuracy, complexity, robustness, and scalability. 
The main theoretical work on fingerprinting algorithms is presented in~\cite{Kaemarungsi2004}, where the authors provide an analysis of the effect of the number of visible \acp{AP} and radio propagation parameters on the performance of fingerprinting algorithms. They provide guidelines for designing and deploying a fingerprinting algorithm where in particular, it is shown that the algorithm does not require a large number of \acp{AP}. Moreover, the grid used for a training database is chosen according to the application requirements, where more dense grids provide {worse accuracy in terms of detecting a correct cell of the grid,} but finer localization in terms of localization errors. These results are extended to complexity analysis in~\cite{Kaemarungsi2005}. The authors in \cite{wen_fundamental_2015} proposed a probabilistic model for \ac{RSS}-based fingerprinting relating the location to received \ac{RSS}. The performance of fingerprinting algorithms has been then discussed using likelihood based detection algorithms and insights have been provided for fingerprinting design. The scalability of fingerprinting algorithms is discussed in~\cite{Ding2013}, where the authors suggest the scalability improvement by reducing a training database size. In~\cite{Meng2011}, the authors discuss the robustness of fingerprinting to outliers and effects such as shadowing. 
As \ac{RSS} differs across different devices, a robust fingerprinting algorithm is proposed in~\cite{Hossain2013} by taking the difference of \ac{RSS} of two \acp{AP} as fingerprints. 
\cite{Beder2012} proposes a method for estimating different antenna attenuations between different devices, and by considering similarly relative differences between \acp{RSS}. 

\vspace{-1mm}
\section{System Model and Definitions}
\label{model}

In this section, the fingerprinting algorithm is formally defined. The used symbols and notations are given in Table~\ref{table:symbols}. The localization space is a connected region $\mathcal{R}$ in $\mathbb R^d$. In practical scenarios, the dimension $d$ is usually $2$ or $3$. The abstract formulation is intended to keep the framework as general as possible. 

\begin{table}[ht]
\centering  
 \begin{tabular}{|c|l|}

 \hline
  Symbol&Description\\
  \hline
  $\mathcal{R}$& Localization space\\
  $\mathcal{S}$& Signal feature space\\
  $\mb{S}, S$& Signal feature\\
  $\mathcal{X}$& Fingerprint space\\
  $\mb{X},X$& Fingerprint\\  
  $\Lambda$& Training grid\\
  $\mb v$& Training points\\
    $\mb u$& Measurement points\\
      $\mb w$& Anchor points\\
  \hline
  \end{tabular}
\caption{Table of Symbols}
\label{table:symbols}
\vspace{-4mm}
\end{table}

\subsection{Fingerprint}

Fingerprinting algorithms are based on associating a fingerprint to a location, which is later used for the identification of that location. A specific feature of environment is chosen as basis for creating a fingerprint. The term \textit{environment} includes multiplicity of possibilities and does not necessarily refer to a particular infrastructure for localization purpose. The signal feature is denoted by $S$ and  belongs to the feature space $\mathcal{S}$. The signal feature can be a combination of multiple real signals, generated by multiple sources like beacons form different APs.  
$m$ consecutive observations of the signal feature $\mb{S}=(S_1,\dots,S_m)\in\mathcal{S}^m$ are a random vector related to the location $\mb{u}$ through the conditional probability $\prob_{\mb S|\mb u}$. Fingerprints are constructed based on observations $\mb{S}$ for each location. 
\begin{definition}[Fingerprint] A fingerprint creating function $f$ is a mapping $\mathcal{S}^m\to\mathcal{X}^n$ that assigns to observations $\mb S$ an element $\mb X$ called the fingerprint at the location $\mathbf{u}$.
 \end{definition}
Note that fingerprint space $\mathcal{X}$ may be different from $\mathcal{S}$.  For example, if a fingerprint is the Gaussian distribution fitted to the vector of $m$ measured received powers, then the fingerprinting space is the space of probability distribution.
Another example is the \ac{AoA} based fingerprinting.  The signal feature belongs to $\mathcal{S}=[0,2\pi)$. If \ac{AoA} is measured $m$ times and the average value is designated as fingerprint, then $n=1$ and $\mathcal{X}=[0,2\pi)$. However if the fingerprint is the empirical distribution of $m$ measurements, then $\mathcal{X}$ is the space of all probability distributions on $[0,2\pi)$.  
Another issue is the robustness of fingerprints. Suppose that for two locations $\mb u_1$ and $\mb u_2$, the probabilities $\prob_{\mb X|\mb u_1}$ and $\prob_{\mb X|\mb u_2}$ are \textit{close enough}. The closeness of probabilities is measured using a metric $d$. The robustness requirement of fingerprints implies that when $d( \prob_{\mb X|\mb u_1}, \prob_{\mb X|\mb u_2})\leq L $ where $d$ is a metric and $L>0$, then $\mb u_1$ and $\mb u_2$ should also be close which is $\norm{\mb u_1-\mb u_2}\leq s(L)$ where $s(L)\to 0$ whenever $L\to 0$. This can be used a definition of spatial robustness of the fingerprint. The quantitative characterization of stability depends on the metric chosen for measuring \textit{closeness} of probabilities.  As it will be discussed later for \ac{RSS} fingerprinting, the robustness analysis of fingerprints is already a good indication for the accuracy of fingerprinting systems.

Fingerprints should also be stable in time. This means that fingerprints should not change dramatically from the moment they are recorded until the moment they are used for identification. The robustness requirement varies with localization systems. It can be defined in terms of closeness of $\prob^{t_0}_{\mb X|\mb u_1}$ and $\prob^{t_1}_{\mb X|\mb u_2}$ where $t_0$ and $t_1$ are respectively training and measurement times. The notion of robustness and stability are very much related in that the spatial robustness guarantees that small change in the fingerprint in time will not result in significant localization error. In this work, the fingerprints are assumed to be stable in time.

\vspace{-1mm}
\subsection{Fingerprinting Algorithm}
After a signal feature and a fingerprint creating function have been selected, the next step is to design the fingerprinting algorithm. The first step is to construct a database consisting of pairs of locations and their fingerprints. Out of an uncountable set of locations in the localization space, only a finite number of locations can be chosen to construct the database. The training database can be constructed through extensive measurements, through simulation-based radio-map construction \cite{scholl_fast_2012} or the combination of both. This step is called the training phase. The set of training locations, called a training grid, is denoted by $\Lambda\subset \mathbb{R}^d$. 
One can choose an algebraic structure for griding using lattices or a non-uniform grids.  The region of nearest points to each  training location $\mb v\in \Lambda$ is called its Voronoi region denoted by $\mathcal{V}_{\mathbf{v}}$.  
The training locations in $\Lambda$ divide the localization space into regions. 
\begin{remark}
\label{rem:Voronoi}
Consider the Voronoi regions of nearest neighbors defined for the training locations. The closest points to a training point are not necessarily those with the closest fingerprints. 
One can define modified Voronoi-region $\hat{\mathcal{V}}_{\mb v_1}$ of a training point $\mb v_1$ as the set of all points $\mb u$ such that their fingerprints are closer to $\mb v_1$ than any other training points. Modified Voronoi regions are in general different from the  Voronoi region. 
\end{remark}

After choosing training locations, a specific feature of environment is chosen for creating a fingerprint. The training database is created by measuring the signal feature $\mb S$ at the training locations inside the training grid $\Lambda$. At each location $\mathbf{v}$, multiple measurements are performed and fingerprints are constructed accordingly.
The training database $\mathcal{D}$, which consists of pairs like $(\mathbf{v},\mb X)$, is a subset of $\Lambda\times \mathcal{X}^n$.  

The next part consists of finding the location of a target node placed at the location $\mathbf{u}$. The target node measures the selected feature $m'$ times and creates a fingerprint $\mb X_{\mb u}\in\mathcal{X}^{n'}$.  For the rest, it is assumed  that $n'=n$ but in general number of measurements in the training phase and in the localization phase can be different. After acquiring the fingerprint, a pattern matching function $g$ is used to estimate the target node's location $\hat{\mathbf{u}}$ based on the acquired fingerprint $\mb X_{\mb u}$ and fingerprints in the training database. The function $g$ is a mapping from $\mathcal X^n$ to the localization space $\mathcal{R}$. The pattern matching function can be regarded as a composition of multiple functions. For instance a similarity kernel function can first find the most similar fingerprints in the database. Then, one can additionally use $k-$nearest neighbor methods to average between the $k$ closest training locations, rather than declaring single training location. 
The average can be weighted or not, adaptive or non-adaptive~\cite{Lemic14experimental_decomposition_of}. 
In any case, $k-$NN methods will have a set of estimated locations $\Lambda_e$, which is not equal to the original set of training locations $\Lambda$. 

The error is obviously $\|\mathbf{u}-\hat{\mathbf{u}}\|$. 

Based on the previous discussion, we can finally specify what a fingerprinting algorithm is.
\begin{definition} [Fingerprinting algorithm] A fingerprinting algorithm for a localization space $\mathcal{R}$ consists of:
\begin{itemize}
\item A set of training locations $\Lambda$
\item A fingerprint creating function $f: \mathcal{S}^m\to\mathcal{X}^n$ that maps a measured signal feature $\mb S$ to a fingerprint $\mb X$ in $\mathcal{X}^n$
\item A training database $\Lambda\times \mathcal{X}^n$ consisting of pairs of locations and their fingerprints
\item A pattern matching function $g: \mathcal{X}^n\to\mathcal{R}$ reporting the final location by comparing the target node's fingerprint to the ones from the training database. 

\end{itemize}
\end{definition}
\vspace{-2mm}

Figure~\ref{fig:srat} represents the functional block of fingerprinting algorithm starting from constructing training database to localization of an arbitrary point $\mb u$. The advantage of this abstraction, as it can be seen later, is its applicability in various scenarios. 

\begin{figure}[!ht]
\vspace{-3mm}
\centering
\includegraphics[width=\columnwidth]{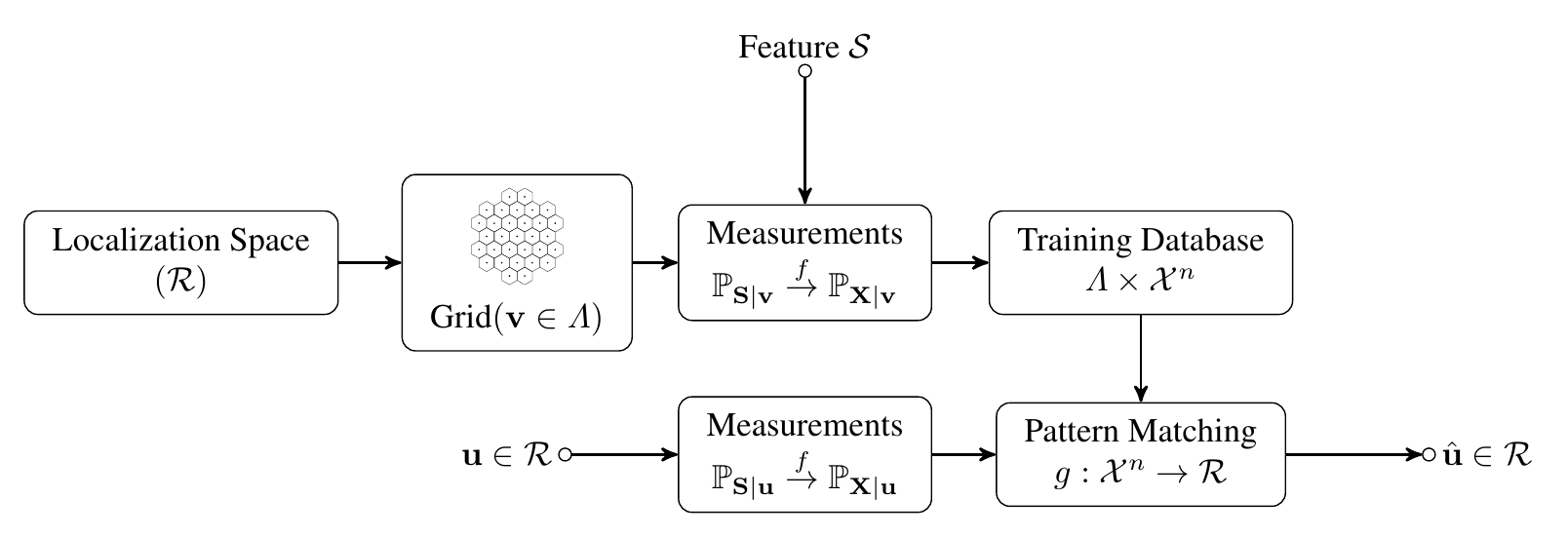}
\vspace{-3mm}
\caption{Functional block of a fingerprinting algorithm} 
\label{fig:srat}
\vspace{-3mm}
\end{figure}

\subsection{Performance of Fingerprinting Algorithms}

After specifying fingerprinting algorithms, we introduce here a framework for evaluating the performance limit of fingerprinting algorithms. 
The main performance metric for any localization algorithm is the localization error. 
 
For a fingerprinting algorithm specified above, the localization error for each $\mathbf{u}\in\mathcal{R}$ is defined as follows:
$$
\Delta(\mathbf{u})=\|\hat{\mathbf{u}}-\mathbf{u}\|.
$$
Similar to the definition of error in information theory and statistics, it is possible to define the maximum and average localization error for all $\mathbf{u}$'s denoted by $\Delta_{\max}$ and $\overline{\Delta}$. Moreover, because we have assumed that the feature is a random variable, the localization error $\Delta(\mathbf{u})$ can also be seen as  random variable. 

\begin{definition} [Achievable localization error] The maximum localization error of $\delta$ is achievable with probability $1-\epsilon$ if there is a fingerprinting algorithm such that :
$$
\Pr(\Delta_{\max}> \delta)\leq \epsilon.
$$
It is possible to similarly define achievability notion of the average localization error.
\end{definition}

It is important  to find all possible $(\delta,\epsilon)$ pairs. In the next section, it is shown that $\delta$ and $\epsilon$ can be made arbitrarily small by increasing training points and number of measurements. Although theoretically encouraging, the number of training locations and the number of measurements are essentially limited and extending the infrastructure is costly.  Under these constraints,  different trade-offs emerge between achievable error and different fingerprinting parameters. What is the minimum number of training locations $|\Lambda|$ that are required for achieving a certain localization error? How does the localization error scale with the number of measurements? How does it scale with the number of anchors? Some of these questions are addressed in this paper.

\vspace{-1mm}
\section{Theoretical Limits of Fingerprinting Algorithms Performance}
\label{limits}
In this section, a general fingerprinting algorithm is considered and bounds on its performance are investigated. The conditional distribution $\mathbb{P}_{X|\mathbf{u}}$ is a general distribution subject to the condition that the fingerprints are i.i.d., i.e., independent and identically distributed . For the moment, it is assumed that fingerprints are spatially robust and have minimum requirements for localization.

\subsection{Fingerprinting with Known Conditional Probability}
 
 As a first step, consider the case where a target node is located either at $\mathbf{u}_1$ or at $\mb u_2$. Moreover suppose that the conditional distributions $\mathbb{P}_{X|\mathbf{u}}$ are known. Two kinds of errors can occur. The first error occurs when the node is located at $\mb u_1$ but the reported location is $\mb u_2$. The second error occurs when the node is located at $\mb u_2$ but the reported location is $\mb u_1$. The probabilities of errors are defined as:	
\begin{align}
 \alpha(\mathbf{u}_1,\mathbf{u}_2)&=\mathbb{P}(g(\mb X)=\mathbf{u}_2 | \text{ Target node is at  } \mathbf{u}_1)\\
\beta(\mathbf{u}_1,\mathbf{u}_2)&=\mathbb{P}(g(\mb X)=\mathbf{u}_1 | \text{ Target node is at  } \mathbf{u}_2).
\end{align}
The goal is to minimize probabilities of both errors simultaneously. The problem of localization boils down to a decision between two probability distributions $\prob_{\mb X|\mb u_1}$ and $\prob_{\mb X|\mb u_2}$ based on the observation $\mb X$ and subject to constraints on $\alpha(\mathbf{u}_1,\mathbf{u}_2)$ and $\beta(\mathbf{u}_1,\mathbf{u}_2)$. 
The problem is indeed statistical hypothesis testing problem \cite{lehmann_testing_2006} where the goal is to decide between $\mathbb{P}_{X|\mathbf{u}_1}$ and $\mathbb{P}_{X|\mathbf{u}_2}$ based on $n$ samples. In this context, the errors $ \alpha(\mathbf{u}_1,\mathbf{u}_2)$ and $\beta(\mathbf{u}_1,\mathbf{u}_2)$ are respectively missed detection and false alarm. This formulation regards a design problem in fingerprinting localization as a hypothesis testing problem and benefits from abundant research materials in that area. As a first step, the following theorem provides fundamental limits on the performance of fingerprinting algorithms.  

\begin{theorem}
Consider a fingerprinting algorithm with fingerprints $\mb X\in\mathcal{X}^n$ consisting of $n$ i.i.d. fingerprints from conditional distribution as $\mathbb{P}_{X|\mathbf{u}}$. 
If the distribution $\mathbb{P}_{X|\mathbf{u}}$ is known to the fingerprinting algorithm, then there is a pattern matching function $g$, such that for each pair of locations $(\mathbf{u}_1,\mathbf{u}_2)$ and for any $0<\epsilon<1$, 
\begin{equation}
 \alpha(\mathbf{u}_1,\mathbf{u}_2) \leq \epsilon
\label{eq:error1}
\end{equation}
\begin{equation}
 \lim_{n\to\infty}\frac{1}{n}\log\beta(\mathbf{u}_1,\mathbf{u}_2)= -D(\mathbb{P}_{X|\mathbf{u}_1}\| \mathbb{P}_{X|\mathbf{u}_2}),
\label{eq:error2}
\end{equation}
where $D(\mathbb{P}_{X|\mathbf{u}_1}\| \mathbb{P}_{X|\mathbf{u}_2})$ is Kullback-Leibler divergence defined as\footnote{The definition can be generalized to cases where $\mathbb{P}_{X|\mathbf{u}_1}(x)$ is a measure absolutely continuous with respect to $\mathbb{P}_{X|\mathbf{u}_2}(x)$.}:
$$
D(\mathbb{P}_{X|\mathbf{u}_1}\| \mathbb{P}_{X|\mathbf{u}_2})=\int_{\mathcal{X}}\mathbb{P}_{X|\mathbf{u}_1}( x)\log\frac{\mathbb{P}_{X|\mathbf{u}_1}( x)}{\mathbb{P}_{X|\mathbf{u}_2}( x)}\mathrm{d} x.
$$
\vspace{-3mm}
\label{thm:1}
\end{theorem}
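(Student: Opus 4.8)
The statement is exactly the classical Chernoff–Stein lemma from asymptotic hypothesis testing, so the plan is to construct $g$ as the Neyman–Pearson optimal test between the two hypotheses $H_1: \mb X \sim \prob_{X|\mb u_1}^{\otimes n}$ and $H_2: \mb X \sim \prob_{X|\mb u_2}^{\otimes n}$, applied coordinate-wise to the $n$ i.i.d. fingerprints, and then invoke the known large-deviation analysis of its error exponents. Concretely, I would define the acceptance region for ``$\hat{\mb u} = \mb u_1$'' as
$$
A_n = \Bigl\{ \mb X \in \mathcal{X}^n : \frac{1}{n}\sum_{i=1}^n \log\frac{\prob_{X|\mb u_1}(X_i)}{\prob_{X|\mb u_2}(X_i)} \geq \gamma_n \Bigr\},
$$
and set $g(\mb X) = \mb u_1$ on $A_n$ and $g(\mb X) = \mb u_2$ otherwise (for all other location pairs the pattern matcher is defined analogously; since the theorem only concerns a fixed pair, this suffices). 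The threshold $\gamma_n$ is chosen as the largest value for which $\alpha(\mb u_1,\mb u_2) = \prob_{X|\mb u_1}^{\otimes n}(A_n^c) \leq \epsilon$, which is possible for every $n$ and gives \eqref{eq:error1} by construction.

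The core of the argument is then \eqref{eq:error2}. First, under $H_1$ the normalized log-likelihood ratio $\frac1n \sum_i \log\frac{\prob_{X|\mb u_1}(X_i)}{\prob_{X|\mb u_2}(X_i)}$ converges almost surely (and in probability) to its mean $D(\prob_{X|\mb u_1}\|\prob_{X|\mb u_2})$ by the law of large numbers; hence one can take $\gamma_n \to D(\prob_{X|\mb u_1}\|\prob_{X|\mb u_2}) - \eta$ for any small $\eta>0$ while keeping $\alpha \leq \epsilon$ for $n$ large. For the upper bound on $\beta$, a change of measure / Markov-inequality argument gives $\beta(\mb u_1,\mb u_2) = \prob_{X|\mb u_2}^{\otimes n}(A_n) \leq e^{-n\gamma_n}$, so $\limsup_n \frac1n \log \beta \leq -D + \eta$. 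For the matching lower bound, one shows that any test with $\alpha \leq \epsilon < 1$ cannot make $\beta$ decay faster than $e^{-nD}$: this is the converse part of Chernoff–Stein, proved by a data-processing / Fano-type inequality relating $D(\prob_{X|\mb u_1}^{\otimes n}\|\prob_{X|\mb u_2}^{\otimes n}) = nD$ to the binary divergence between the $(\alpha,1-\alpha)$ and $(1-\beta,\beta)$ distributions of the test's output, i.e. $nD \geq d_{\mathrm{KL}}(\alpha\,\|\,1-\beta) \geq -\alpha\log\beta - h(\alpha)$ with $h$ the binary entropy, whence $\frac1n\log\beta \geq -\frac{D}{1-\epsilon} - o(1)$; letting $\epsilon$ be absorbed (or noting that the construction achieves $-D$ exactly in the limit) pins down the limit. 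Finally, let $\eta \to 0$ to conclude the stated equality.

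The main obstacle, and the only part that needs genuine care rather than citation, is the converse (lower bound) on the exponent of $\beta$: one must argue that optimizing over \emph{all} pattern matching functions $g$ — not just likelihood-ratio tests — cannot beat the exponent $-D$, and that the constraint $\alpha \leq \epsilon$ with $\epsilon$ bounded away from $1$ is what makes the exponent independent of $\epsilon$. I would handle this via the standard relative-entropy data-processing inequality applied to the two-point partition induced by $g$, which is clean and avoids any combinatorial method-of-types machinery; the i.i.d.\ assumption on the fingerprints (explicitly granted in the theorem's hypotheses) is exactly what makes $D(\prob_{X|\mb u_1}^{\otimes n}\|\prob_{X|\mb u_2}^{\otimes n}) = n\,D(\prob_{X|\mb u_1}\|\prob_{X|\mb u_2})$ and hence makes the exponent linear in $n$. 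The achievability direction is routine once the threshold is tuned as above.
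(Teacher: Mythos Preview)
Your proposal is correct and follows the same route as the paper: both identify the statement as Stein's lemma, construct $g$ as a log-likelihood-ratio test, control $\alpha$ via the law of large numbers, and obtain the exponential decay of $\beta$ by a change of measure on the acceptance region. The only noteworthy difference is how the \emph{lower} bound on $\beta$ is obtained. The paper takes as decision region the two-sided typical set
\[
A_\epsilon^n=\Bigl\{\mb x:\Bigl|\tfrac1n\sum_i\log\tfrac{\prob_{X|\mb u_2}(x_i)}{\prob_{X|\mb u_1}(x_i)}+D\Bigr|\le\epsilon\Bigr\},
\]
so the very same change-of-measure computation that gives $\beta\le e^{-n(D-\epsilon)}$ also gives $\beta\ge(1-\epsilon)\,e^{-n(D+\epsilon)}$ directly for that specific test, and the limit follows. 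You instead reach for a universal converse over all tests via the data-processing inequality; this is more than the theorem asks for (it only claims existence of some $g$), and, as you yourself note, the simple form of that argument yields exponent $-D/(1-\epsilon)$ rather than $-D$ for fixed $\epsilon$. The paper's two-sided bound on its own test sidesteps that wrinkle; equivalently, if you simply intersect your one-sided region $A_n$ with the event $\{\tfrac1n\sum_i\log\tfrac{\prob_{X|\mb u_1}}{\prob_{X|\mb u_2}}\le D+\eta\}$, the lower bound on $\beta$ drops out immediately without any converse.
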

\begin{IEEEproof}
See Appendix. 
\end{IEEEproof}

Theorem~\ref{thm:1} provides fundamental bounds on error probability if maximum localization error is intended. First it shows that with large number of measurements, it is possible to distinguish correctly any two points in the space. But more importantly, it characterizes the latency of localization algorithm. The theorem shows that if $D(\mathbb{P}_{X|\mathbf{u}_1}\| \mathbb{P}_{X|\mathbf{u}_2})$ is non-zero, i.e. strictly positive, then the probability of error asymptotically decreases exponentially with $D(\mathbb{P}_{X|\mathbf{u}_1}\| \mathbb{P}_{X|\mathbf{u}_2})$. 
It can also be shown that this is the best exponent one can get under the first constraint on $\alpha(\mathbf{u}_1,\mathbf{u}_2)$ \cite{lehmann_testing_2006}. 
If the \ac{KL}-divergence of two probability distributions is small, this means that the error decreases with smaller exponent and therefore more measurements are needed to guarantee a fast decrease in error. In other words, $n$ should be increased to compensate for small divergence and reduce the error. 

The hypothesis testing formulation of localization problem provides the possibility of considering other fingerprinting  localization problems as an equivalent hypothesis testing problem. For instance, consider the problem of optimal fingerprint construction. From \ac{HT} point of view, the optimal fingerprint is the sufficient statistics and the pattern matching function is the statistical test. It is known that Neyman-Pearson tests are optimal tests. They achieve the minimum possible value for one of the errors under fixed bound on another error. A Neyman-Pearson test is a test in which the normalized observed log-likelihood ratio is compared with a threshold $\gamma$. Based on the observations $\mb X$, the normalized log-likelihood ratio $T_n$ is defined as:
\begin{equation}
 T_n=\frac 1n\log \frac{\mathbb{P}_{\mb X|\mathbf{u}_1}}{\mathbb{P}_{\mb X|\mathbf{u}_2}}.
 \label{eq:NPtest}
\end{equation}
If $T_n$ is bigger than $\gamma$, $\mathbf{u}_1$ is announced as the location and otherwise, $\mb u_2$ is announced. For Neyman-Pearson test, there are lot of results characterizing the asymptotic and non-asymptotic behavior of the errors in~\eqref{eq:error1} and~\eqref{eq:error2}. 

Another example is the map-aware localization. Having additional information about the indoor map and context of localization amounts to knowing a priori the probability that a target node is present at each location. This problem is equivalent to finding the best Bayes probability of error defined as:
$$
P_n^{(e)}= \prob(\mathbf{u}_1) \alpha(\mathbf{u}_1,\mathbf{u}_2)+\prob(\mathbf{u}_2) \beta(\mathbf{u}_1,\mathbf{u}_2).
$$
Interestingly Neyman-Pearson test with $\gamma=0$ is the optimal test.
\begin{proposition}
If the probability that the target node is present at $\mb u_1$ is in $(0,1)$, then there is a map-aware fingerprinting algorithm such that:
\begin{equation}
\liminf_{n\to\infty}\frac{1}{n}\log P_n^{(e)}=-I_0(0),
\end{equation}
where $I_0(0)$ is called the Chernoff information of the probabilities $\mathbb{P}_{\mb X|\mathbf{u}_1}$ and $\mathbb{P}_{\mb X|\mathbf{u}_2}$ and is defined as:
$$
I_0(0)=-\log\inf_{t\in\mathbb R}\esp\left( \exp\left(t \log\frac{\mathbb{P}_{  X|\mathbf{u}_2}}{\mathbb{P}_{ X|\mathbf{u}_1}}\right )\right)
$$
and the expectation is with respect to ${\mathbb{P}_{\mb X|\mathbf{u}_1}}$.
\label{prop:MapAware}
\end{proposition}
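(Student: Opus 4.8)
\begin{IEEEproof}[Proof sketch of Proposition~\ref{prop:MapAware}]
Write $P_i:=\mathbb{P}_{X|\mathbf{u}_i}$ and $\pi_i:=\prob(\mathbf{u}_i)$ (so $\pi_1\in(0,1)$, $\pi_2=1-\pi_1$), and let $P_i^{\otimes n}$ denote the law of $\mb X$ when the target is at $\mathbf{u}_i$. The plan is to take $g$ to be the maximum-a-posteriori (MAP) pattern matching function and to show that it achieves $\tfrac1n\log P_n^{(e)}\to -I_0(0)$; since a genuine limit implies the $\liminf$ equals the same value, this proves the claim. The argument has an easy \emph{achievability} half (a Chernoff bound) and a harder \emph{converse} half (a large-deviation lower bound), and the MAP rule is the natural choice because, by definition, it minimises $P_n^{(e)}=\pi_1\alpha(\mathbf{u}_1,\mathbf{u}_2)+\pi_2\beta(\mathbf{u}_1,\mathbf{u}_2)$ over all decision rules. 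Concretely, this $g$ announces $\mathbf{u}_2$ exactly when $\sum_{i=1}^{n}\log\frac{P_2(X_i)}{P_1(X_i)}\ge\log\frac{\pi_1}{\pi_2}$; the normalised threshold $\tfrac1n\log\frac{\pi_1}{\pi_2}$ vanishes, so asymptotically this is the Neyman--Pearson test~\eqref{eq:NPtest} with $\gamma=0$.

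\textbf{Achievability.} Using the elementary bound $\min(a,b)\le a^{1-t}b^{t}$ valid for $a,b\ge0$ and $t\in[0,1]$, together with $P_n^{(e)}=\int\min\big(\pi_1\prod_{i}P_1(x_i),\,\pi_2\prod_{i}P_2(x_i)\big)\,\mathrm{d}x^{n}$ and independence, one gets
\[
P_n^{(e)}\;\le\;\pi_1^{1-t}\pi_2^{t}\,\Big(\textstyle\int P_1^{1-t}(x)P_2^{t}(x)\,\mathrm{d}x\Big)^{n}\qquad\text{for every }t\in[0,1],
\]
hence $\limsup_{n}\tfrac1n\log P_n^{(e)}\le\log\int P_1^{1-t}P_2^{t}$ for all such $t$. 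The map $t\mapsto\int P_1^{1-t}P_2^{t}$ is the moment generating function of $\log(P_2/P_1)$ under $P_1$, hence log-convex, and it equals $1$ at $t=0$ and at $t=1$; therefore its infimum over $\mathbb{R}$ is attained in $[0,1]$ and equals $e^{-I_0(0)}$. Optimising over $t$ gives $\limsup_{n}\tfrac1n\log P_n^{(e)}\le -I_0(0)$.

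\textbf{Converse.} Since $P_n^{(e)}\ge\pi_1\,\alpha(\mathbf{u}_1,\mathbf{u}_2)$ for the MAP rule, it suffices to lower bound the exponent of $\alpha(\mathbf{u}_1,\mathbf{u}_2)=P_1^{\otimes n}\big(\sum_{i}\log\tfrac{P_2(X_i)}{P_1(X_i)}\ge\log\tfrac{\pi_1}{\pi_2}\big)$. If $P_1=P_2$ then $I_0(0)=0$ and $P_n^{(e)}=\min(\pi_1,\pi_2)$, so the claim is trivial; otherwise the i.i.d.\ summands $\log\tfrac{P_2(X_i)}{P_1(X_i)}$ have mean $-D(P_1\|P_2)<0$ under $P_1^{\otimes n}$, while the normalised threshold tends to $0>-D(P_1\|P_2)$, which is exactly the setting of Cram\'er's large-deviation theorem. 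Its cumulant generating function is $\Lambda(t)=\log\esp_{P_1}\big[(P_2/P_1)^{t}\big]=\log\int P_1^{1-t}P_2^{t}$, so Cram\'er's theorem yields $\lim_{n}\tfrac1n\log\alpha(\mathbf{u}_1,\mathbf{u}_2)=-\sup_{t\ge0}\big(-\Lambda(t)\big)=-\big(-\inf_{t\in\mathbb{R}}\Lambda(t)\big)=-I_0(0)$, the middle equality following because $\Lambda(0)=\Lambda(1)=0$ and convexity place the minimiser of $\Lambda$ in $[0,1]\subset[0,\infty)$. Therefore $\liminf_{n}\tfrac1n\log P_n^{(e)}\ge -I_0(0)$, and with the achievability bound the limit exists and equals $-I_0(0)$. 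An equivalent route tilts to the law $P_{t^{\star}}\propto P_1^{1-t^{\star}}P_2^{t^{\star}}$ satisfying the balance $D(P_{t^{\star}}\|P_1)=D(P_{t^{\star}}\|P_2)=I_0(0)$ and combines a change of measure with a central limit estimate showing the MAP decision boundary carries non-vanishing $P_{t^{\star}}^{\otimes n}$-mass --- the information-geometric proof of Chernoff's theorem.

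\textbf{Main obstacle.} The delicate point is the converse at the generality of an arbitrary feature space $\mathcal{X}$ and arbitrary conditionals $\mathbb{P}_{X|\mathbf{u}}$: one must control a cumulant generating function $\Lambda$ that may be $+\infty$ off a subinterval of $\mathbb{R}$, a minimiser $t^{\star}$ possibly at the boundary of its effective domain, the absence of mutual absolute continuity of $P_1$ and $P_2$, and lattice versus non-lattice behaviour of the log-likelihood ratio (which affects only sub-exponential prefactors, not the exponent). The achievability half, by contrast, is the one-line Chernoff bound displayed above.
\end{IEEEproof}
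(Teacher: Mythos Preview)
Your proof is correct and follows precisely the large-deviation route the paper invokes: the paper's own proof is simply a one-line citation to \cite{dembo_large_2010} (Chernoff's theorem via Cram\'er), and what you have written is exactly the argument found there --- Chernoff upper bound for achievability, Cram\'er lower bound for the converse, with the observation that the log-convex map $t\mapsto\int P_1^{1-t}P_2^{t}$ attains its infimum on $[0,1]$. Your sketch is considerably more detailed than the paper's, but not methodologically different.
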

\begin{proof}
The proof is exact replication of equivalent hypothesis testing problem and follows from large deviation theory analysis. It can be found in \cite{dembo_large_2010}.  
\end{proof}
The important insight is that map-aware localization can be done using log-likelihood ration test $T_n$ in \eqref{eq:NPtest} with the threshold zero. In this case, the error decreases exponentially with $n$ and with the Chernoff information $I_0(0)$. Many other similar and interesting conclusions can be made from the hypothesis testing analogy. For instance, if Neyman-Pearson test is used, the errors decrease exponentially with rates dependent on the choice of threshold $\gamma$, similar to Proposition \ref{prop:MapAware}.  

\subsection{Fingerprinting with Unknown Conditional Probability}

The errors in~\eqref{eq:error1} and~\eqref{eq:error2} are obtained under the strong assumption that the algorithm knows the conditional probability distribution. 
One point of using fingerprinting is exactly to avoid the complexity behind characterizing the relation between locations and signal features. 
Hence, it is interesting to understand what happens if the conditional probability distribution is not known. One option is to use the training phase to learn the probability distribution at each location. Suppose that fingerprints take their value on a finite space $\mathcal{X}$. The assumption of finiteness is natural since the measurements are usually quantized and scaled version of the signal feature. Suppose that in the training phase, $\mb X$ is observed at $\mathbf{u}$. The empirical distribution of $\mb X$ can be defined as:
$$
\mathbb{Q}_{\mb X|\mathbf{u}}(x)=\frac{1}{n}\sum_{i=1}^n \mathbf{1}(X_i=x),
$$
where $ \mathbf{1}(X_i=x)$ is equal to one if $X_i=x$ and zero otherwise. 
\begin{proposition}
 The empirical distribution $\mathbb{Q}_{\mb X|\mathbf{u}}$ converges point-wise to the true distribution $\mathbb{P}_{\mb X|\mathbf{u}}$ exponentially fast with the number of measurements:
\begin{align}
 \Pr(\left|\mathbb{Q}_{\mb X|\mathbf{u}}(x)-\mathbb{P}_{\mb X|\mathbf{u}}(x)\right| \geq a)\leq 2e^{-2na^2}.
 \label{ineq:Empir}
\end{align}
\end{proposition}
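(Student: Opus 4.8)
The plan is to observe that, for a fixed fingerprint value $x\in\mathcal{X}$, the random quantity $\mathbb{Q}_{\mb X|\mathbf{u}}(x)$ is nothing but a sample mean of bounded i.i.d.\ random variables, so that the claim reduces to a standard concentration bound. First I would fix $x$ and set $Y_i=\mathbf{1}(X_i=x)$ for $i=1,\dots,n$. Since the fingerprints $X_1,\dots,X_n$ are i.i.d.\ with marginal $\mathbb{P}_{\mb X|\mathbf{u}}$, the variables $Y_i$ are i.i.d.\ Bernoulli, take values in $[0,1]$, and satisfy $\esp Y_i=\Pr(X_i=x)=\mathbb{P}_{\mb X|\mathbf{u}}(x)$. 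By the definition of the empirical distribution, $\mathbb{Q}_{\mb X|\mathbf{u}}(x)=\frac1n\sum_{i=1}^n Y_i$, hence $\esp\big[\mathbb{Q}_{\mb X|\mathbf{u}}(x)\big]=\mathbb{P}_{\mb X|\mathbf{u}}(x)$, and the event whose probability we must bound is exactly the deviation of a normalized sum of bounded i.i.d.\ terms from its expectation.

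Next I would invoke Hoeffding's inequality for $\sum_{i=1}^n Y_i$. Since each $Y_i$ lies in an interval of length one, Hoeffding's bound yields $\Pr\!\left(\left|\tfrac1n\sum_{i}Y_i-\esp Y_1\right|\ge a\right)\le 2\exp\!\big(-2n^2a^2/n\big)=2e^{-2na^2}$, which is precisely~\eqref{ineq:Empir}. For a self-contained argument one can instead derive this by the Chernoff method: bound $\Pr\big(\sum_i (Y_i-\esp Y_i)\ge na\big)\le e^{-tna}\prod_i\esp e^{t(Y_i-\esp Y_i)}$, apply Hoeffding's lemma $\esp e^{t(Y_i-\esp Y_i)}\le e^{t^2/8}$ valid for any variable supported in $[0,1]$, optimize over $t>0$ to get the upper-tail bound $e^{-2na^2}$, and add the symmetric lower-tail estimate to produce the factor $2$.

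Finally, the pointwise convergence assertion follows immediately: because $\sum_{n\ge1}2e^{-2na^2}<\infty$ for every $a>0$, the Borel--Cantelli lemma gives $\mathbb{Q}_{\mb X|\mathbf{u}}(x)\to\mathbb{P}_{\mb X|\mathbf{u}}(x)$ almost surely, and the inequality makes the exponential rate in $n$ explicit. I do not anticipate any genuine obstacle; the one point worth noting is that Hoeffding's lemma uses only boundedness of the $Y_i$, with no constraint on the Bernoulli parameter, so the bound is uniform in $x$ and in the value $\mathbb{P}_{\mb X|\mathbf{u}}(x)$. In particular the finiteness of $\mathcal{X}$ assumed in the surrounding discussion is not needed for this pointwise claim — it becomes relevant only when one subsequently takes a union bound over all $x\in\mathcal{X}$.
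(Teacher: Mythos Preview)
Your proposal is correct and follows exactly the paper's approach: recognize that $\mathbb{Q}_{\mb X|\mathbf{u}}(x)=\tfrac1n\sum_i \mathbf{1}(X_i=x)$ is an average of i.i.d.\ indicators supported in $\{0,1\}$ and apply Hoeffding's inequality. The additional remarks you make (the Chernoff derivation, Borel--Cantelli for almost-sure convergence, and the observation that finiteness of $\mathcal{X}$ is not needed for the pointwise bound) go beyond what the paper states but are all correct.
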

\begin{IEEEproof}
$\mathbb{Q}_{\mb X|\mathbf{u}}(x)$ is sum of $n$ independent random variables $\mathbf{1}(X_i=x)$ with support in $\{0,1\}$. Using Hoeffding's  inequality \cite{hoeffding_probability_1963}, the above inequality is obtained. 
\end{IEEEproof}
Therefore for a finite fingerprinting space $\mathcal{X}$, it is guaranteed  with large $n$  that the conditional probability distribution $\mathbb{P}_{\mb X|\mathbf{u}}$ is recovered during the training phase. Note that the decrease exponent is $a^2$. It is possible to derive the exact exponent using large deviation theory. To this end, the total variation distance is used as a measure for closeness of probabilities:
$$
d_{\mathrm{TV}}(\mathbb{Q}_{X|\mathbf{u}},\mathbb{P}_{X|\mathbf{u}})=\frac{1}{2}\sum_{x\in\mathcal{X}} |\mathbb{Q}_{X|\mathbf{u}}(x)-\mathbb{P}_{X|\mathbf{u}}(x)|. 
$$
\begin{proposition}
A fingerprinting localization algorithm can estimate $\mathbb{P}_{X|\mathbf{u}}$ by the empirical distribution $\mathbb{Q}_{X|\mathbf{u}}$ such that:
\small
\begin{equation}
  \lim_{n\to\infty}\frac{1}{n}\log\mathbb{P}(d_{\mathrm{TV}}(\mathbb{Q}_{X|\mathbf{u}},\mathbb{P}_{X|\mathbf{u}}) > a)=- \inf_{\mathbb{P}\in B(a,\mathcal{X})} D(\mathbb{P}||\mathbb{P}_{X|\mathbf{u}})
\label{eq:errorsanov}
\end{equation}
\normalsize
where
$$
B(a,\mathcal{X})=\{\mathbb{P}: d_{\mathrm{TV}}(\mathbb{P},\mathbb{P}_{X|\mathbf{u}}) > a \}.
$$
\vspace{-4mm}
\label{prop:1}
\end{proposition}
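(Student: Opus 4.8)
The plan is to recognize the assertion as an instance of Sanov's theorem for the empirical distribution of $n$ i.i.d.\ samples on the finite alphabet $\mathcal{X}$, and to establish it directly by the method of types, so that it is self-contained modulo the large deviation references already cited. As a harmless normalization I would first assume $\mathbb{P}_{X|\mathbf{u}}$ has full support on $\mathcal{X}$; otherwise one replaces $\mathcal{X}$ by $\mathrm{supp}(\mathbb{P}_{X|\mathbf{u}})$, on which $\mathbb{Q}_{X|\mathbf{u}}$ is supported almost surely, without changing either side of the claimed identity. Under this assumption the map $\mathbb{P}\mapsto D(\mathbb{P}\|\mathbb{P}_{X|\mathbf{u}})$ is finite and continuous on the whole probability simplex $\mathcal{P}(\mathcal{X})$ (with the convention $0\log 0=0$), and this continuity is the ingredient that will force the exponents coming from the upper and lower estimates to coincide.

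Next I would record the two classical method-of-types estimates. Let $\mathcal{P}_n(\mathcal{X})$ denote the set of possible values (``types'') of $\mathbb{Q}_{X|\mathbf{u}}$ after $n$ measurements; its cardinality is at most $(n+1)^{|\mathcal{X}|}$, and for every $\mathbb{Q}\in\mathcal{P}_n(\mathcal{X})$ one has
\[
(n+1)^{-|\mathcal{X}|}\,e^{-n D(\mathbb{Q}\|\mathbb{P}_{X|\mathbf{u}})}\ \le\ \mathbb{P}\!\left(\mathbb{Q}_{X|\mathbf{u}}=\mathbb{Q}\right)\ \le\ e^{-n D(\mathbb{Q}\|\mathbb{P}_{X|\mathbf{u}})}.
\]
For the upper bound on the probability in the proposition I would sum the right-hand inequality over those types that lie in $B(a,\mathcal{X})$: there are at most $(n+1)^{|\mathcal{X}|}$ of them, and each contributes at most $e^{-n\inf_{\mathbb{P}\in B(a,\mathcal{X})}D(\mathbb{P}\|\mathbb{P}_{X|\mathbf{u}})}$ since $B(a,\mathcal{X})\cap\mathcal{P}_n(\mathcal{X})\subseteq B(a,\mathcal{X})$. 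Taking $\frac1n\log(\cdot)$ and letting $n\to\infty$ makes the polynomial prefactor vanish and yields
\[
\limsup_{n\to\infty}\frac1n\log\mathbb{P}\big(d_{\mathrm{TV}}(\mathbb{Q}_{X|\mathbf{u}},\mathbb{P}_{X|\mathbf{u}})>a\big)\ \le\ -\inf_{\mathbb{P}\in B(a,\mathcal{X})}D(\mathbb{P}\|\mathbb{P}_{X|\mathbf{u}}).
\]

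For the matching lower bound I would fix any $\mathbb{P}^\star\in B(a,\mathcal{X})$ and use that $B(a,\mathcal{X})$ is \emph{open} in $\mathcal{P}(\mathcal{X})$ --- which is precisely where the strict inequality $d_{\mathrm{TV}}>a$ in the definition of $B(a,\mathcal{X})$ enters. Since the types $\mathcal{P}_n(\mathcal{X})$ become dense in $\mathcal{P}(\mathcal{X})$ (each coordinate can be matched to within $O(1/n)$), I can pick types $\mathbb{Q}_n\to\mathbb{P}^\star$; continuity of $d_{\mathrm{TV}}$ then gives $\mathbb{Q}_n\in B(a,\mathcal{X})$ for all large $n$, so
\[
\mathbb{P}\big(d_{\mathrm{TV}}(\mathbb{Q}_{X|\mathbf{u}},\mathbb{P}_{X|\mathbf{u}})>a\big)\ \ge\ \mathbb{P}\!\left(\mathbb{Q}_{X|\mathbf{u}}=\mathbb{Q}_n\right)\ \ge\ (n+1)^{-|\mathcal{X}|}\,e^{-n D(\mathbb{Q}_n\|\mathbb{P}_{X|\mathbf{u}})}.
\]
Applying $\frac1n\log$, letting $n\to\infty$, and using continuity of $D(\cdot\|\mathbb{P}_{X|\mathbf{u}})$ (so that $D(\mathbb{Q}_n\|\mathbb{P}_{X|\mathbf{u}})\to D(\mathbb{P}^\star\|\mathbb{P}_{X|\mathbf{u}})$) gives $\liminf_{n\to\infty}\frac1n\log\mathbb{P}(\cdots)\ge -D(\mathbb{P}^\star\|\mathbb{P}_{X|\mathbf{u}})$; since $\mathbb{P}^\star\in B(a,\mathcal{X})$ is arbitrary, the right side improves to $-\inf_{\mathbb{P}\in B(a,\mathcal{X})}D(\mathbb{P}\|\mathbb{P}_{X|\mathbf{u}})$. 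Combining with the upper bound makes $\limsup$ and $\liminf$ agree, which is the asserted limit. (When $a\ge 1$ the set $B(a,\mathcal{X})$ is empty and both sides equal $-\infty$ under the usual conventions.)

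The cardinality bound on $\mathcal{P}_n(\mathcal{X})$, the per-type probability estimates, and the density of types are standard facts of the method of types and can be cited from the large deviation references already used (e.g.\ \cite{dembo_large_2010}). The step deserving genuine care --- and where I expect to spend the most effort --- is making the optimization commute with the $n\to\infty$ limit: guaranteeing that the approximating types $\mathbb{Q}_n$ in the lower bound stay inside $B(a,\mathcal{X})$ and that $D(\mathbb{P}^\star\|\mathbb{P}_{X|\mathbf{u}})$ is finite and approached continuously. Both are handled by the opening reduction to $\mathrm{supp}(\mathbb{P}_{X|\mathbf{u}})$ together with the openness of $B(a,\mathcal{X})$; had the proposition been stated with $d_{\mathrm{TV}}\ge a$ instead of $>a$, the lower bound would only reach the infimum over the interior and a brief extra argument, perturbing a boundary minimizer slightly outward, would be needed to recover the same exponent.
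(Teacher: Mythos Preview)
Your proposal is correct and is essentially the same approach as the paper's: the paper simply records that the statement is a direct consequence of Sanov's theorem and cites \cite{dembo_large_2010}, while you unpack that citation and give the standard method-of-types proof of Sanov on a finite alphabet, applied to the open total-variation ball $B(a,\mathcal{X})$. Your handling of the only delicate points---reducing to $\mathrm{supp}(\mathbb{P}_{X|\mathbf{u}})$ so that $D(\cdot\|\mathbb{P}_{X|\mathbf{u}})$ is finite and continuous, and using the openness of $B(a,\mathcal{X})$ to keep approximating types inside it for the lower bound---is exactly what makes the $\limsup$ and $\liminf$ coincide, so the limit in the proposition (rather than just a large-deviation upper/lower pair) indeed holds.
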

\begin{IEEEproof}
The proposition is a direct consequence of Sanov's theorem~\cite{dembo_large_2010}. 
{Note that a similar analysis can be conducted for a general feature space $\mathcal{X}$, not necessarily finite. We avoid the technicalities here as the main conclusions remain similar. For the general analysis refer to~\cite[Chapter 6.]{dembo_large_2010}. }
\end{IEEEproof}

The previous proposition guarantees that the probability distribution can be estimated well during the training phase using $n$ number of measurements. Moreover the empirical distribution $\mathbb{Q}_{X|\mathbf{v}}$ can act as a fingerprint itself. 

The discussions in this section show that a fingerprinting algorithm does not need to know a priori the probability distribution relating a feature to a location. 
In general, for a spatially robust fingerprint,  there is a fingerprinting algorithm that can detect all locations without knowing the conditional probability distributions from enough large number of measurements. The error probabilities decrease exponentially with the number of measurements. 
\subsection{Fingerprinting with Training Grids}
So far, a two-point localization scenario has been considered. This can be easily extended to a localization scenario with finite candidate locations. Basically \ac{FPS} can reliably and accurately localize a target node within finite possible locations using enough large number of measurements. When possible locations are uncountable, the localization space is divided into multiple regions and the equivalent \ac{HT} problem aims at finding the region in which the target node is located. These regions are indeed related to the training grid. 
No assumption was made regarding the training database in previous part. Indeed, in order to have perfectly  accurate fingerprinting, training fingerprints are required for every points which is practically impossible. Suppose that training locations are limited to those in $\Lambda$, which is a finite set. The fingerprinting algorithm announces a training point as the estimated location. If a user is located at $\mb u$, then the question is how well one can localize the user by comparison to the training database. Following the discussion in Remark \ref{rem:Voronoi}, the estimated location is not necessarily the geometrically closest training point to the location $\mb u$ . Based on the specific chosen fingerprint and the pattern matching function, the modified Voronoi region of points, denoted by $\hat{\mathcal{V}}_{\mb v}$, is used. Localization in this scenario is equivalent to finding the modified Voronoi region $\hat{\mathcal{V}}_{\mathbf{v}}$ that contains the target node and  thus it determines the closest training location to the node. For this scenario, the following errors can be defined:
$$
\alpha(\hat{\mathcal{V}}_{\mathbf{v}_1},\hat{\mathcal{V}}_{\mathbf{v}_2})=\mathbb{P}(g(\mb X)=\mathbf{v}_2 | \text{ Target node is inside  } \hat{\mathcal{V}}_{\mathbf{v}_1}).
$$
$$
\beta(\hat{\mathcal{V}}_{\mathbf{v}_1},\hat{\mathcal{V}}_{\mathbf{v}_2})=\mathbb{P}(g(\mb X)=\mathbf{v}_1 | \text{ Target node is inside  } \hat{\mathcal{V}}_{\mathbf{v}_2}).
$$
Following theorem provides the fundamental limit of fingerprinting localization using  finite number of training location.

\begin{theorem}
Consider a fingerprinting algorithm with training locations in $\Lambda$. For the fingerprint $\mb X$, there is a pattern matching function $g$, based on the empirical distribution, such that for each $\epsilon>0$ and enough large $n$, $\alpha(\hat{\mathcal{V}}_{\mathbf{v}_1},\hat{\mathcal{V}}_{\mathbf{v}_2})\leq \epsilon$ and for $\mb u_2\notin  \hat{\mathcal{V}}_{\mathbf{v}_1}$, we have:

\begin{equation}
 \lim_{n\to\infty}\frac{1}{n}\log\beta(\hat{\mathcal{V}}_{\mathbf{v}_1},\hat{\mathcal{V}}_{\mathbf{v}_2})= -D(\mathbb{P}_{X|\hat{\mathcal{V}}_{\mathbf{v}_1}}\| \mathbb{P}_{X|{\mathbf{u}_2}}),
\label{eq:error4}
\end{equation}
where :
$$
D(\mathbb{P}_{X|\hat{\mathcal{V}}_{\mathbf{v}_1}}\| \mathbb{P}_{X|{\mathbf{u}_2}})=\inf_{\mb u \in \hat{\mathcal{V}}_{\mathbf{v}_1}} D(\mathbb{P}_{X|\mb u}\| \mathbb{P}_{X|{\mathbf{u}_2}}).
$$
\label{thm:1.1}
\vspace{-2mm}
\end{theorem}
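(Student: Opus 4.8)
The plan is to reduce Theorem~\ref{thm:1.1} to Theorem~\ref{thm:1} together with Proposition~\ref{prop:1}, by combining a learning step (estimating the conditional distributions via empirical distributions) with a composite-hypothesis version of the Neyman--Pearson argument. First I would set up the equivalent \ac{HT} problem: testing the null hypothesis $H_0:\mb u\in\hat{\mathcal{V}}_{\mathbf v_1}$, i.e.\ the fingerprint law belongs to the family $\{\mathbb{P}_{X|\mb u}:\mb u\in\hat{\mathcal{V}}_{\mathbf v_1}\}$, against the simple alternative $H_1$ corresponding to a fixed $\mb u_2\notin\hat{\mathcal{V}}_{\mathbf v_1}$. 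The pattern matching function $g$ is built from the empirical distribution $\mathbb{Q}_{X|\mb u}$ of the $n$ measured fingerprints: declare $\mb v_1$ iff $\mathbb{Q}_{X|\mb u}$ lies in a suitable neighborhood of the set $\{\mathbb{P}_{X|\mb u}:\mb u\in\hat{\mathcal{V}}_{\mathbf v_1}\}$ (in total variation, say), and $\mb v_2$ otherwise. Intuitively this is the generalized-likelihood-ratio / minimum-KL test, and by Proposition~\ref{prop:1} the empirical distribution concentrates exponentially fast on the true $\mathbb{P}_{X|\mb u}$, so the ``acceptance region'' can be taken shrinking with $n$.

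Next I would bound the two error types. For $\alpha(\hat{\mathcal{V}}_{\mathbf v_1},\hat{\mathcal{V}}_{\mathbf v_2})$: if the true location is some $\mb u\in\hat{\mathcal{V}}_{\mathbf v_1}$, then by Proposition~\ref{prop:1} (Sanov) the probability that $\mathbb{Q}_{X|\mb u}$ leaves a fixed neighborhood of $\mathbb{P}_{X|\mb u}$ decays like $e^{-n c}$ for some $c>0$; choosing the test's tolerance appropriately makes $\alpha\le\epsilon$ for all large $n$, uniformly over $\mb u\in\hat{\mathcal{V}}_{\mathbf v_1}$. This gives the first claim. For $\beta$: under $H_1$ (true location $\mb u_2$) a false declaration of $\mb v_1$ forces $\mathbb{Q}_{X|\mb u}$ into the acceptance region, which is (essentially) a neighborhood of the set $\mathcal{P}_1:=\{\mathbb{P}_{X|\mb u}:\mb u\in\hat{\mathcal{V}}_{\mathbf v_1}\}$. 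By Sanov's theorem the exponential rate of this event under $\mathbb{P}_{X|\mb u_2}^{\otimes n}$ is $\inf_{\mathbb{P}\in\overline{\mathcal{P}_1}} D(\mathbb{P}\,\|\,\mathbb{P}_{X|\mb u_2})$, and since the divergence is continuous in its first argument on the (finite-dimensional) simplex, this infimum equals $\inf_{\mb u\in\hat{\mathcal{V}}_{\mathbf v_1}} D(\mathbb{P}_{X|\mb u}\|\mathbb{P}_{X|\mb u_2})$, i.e.\ exactly $D(\mathbb{P}_{X|\hat{\mathcal{V}}_{\mathbf v_1}}\|\mathbb{P}_{X|\mb u_2})$ as defined in the statement. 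Matching upper and lower bounds on the rate (the lower bound coming from restricting to a single $\mb u$ achieving the infimum and invoking Theorem~\ref{thm:1}, the upper bound from the Sanov estimate over the whole acceptance region) yields the limit in~\eqref{eq:error4}.

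The subtle points I expect to be the main obstacle are twofold. First, the acceptance region cannot be literally a \emph{fixed} neighborhood of $\mathcal{P}_1$ if we want the $\beta$-exponent to be \emph{exactly} $D(\mathbb{P}_{X|\hat{\mathcal{V}}_{\mathbf v_1}}\|\mathbb{P}_{X|\mb u_2})$ rather than something strictly smaller: one must let the tolerance $a=a_n\to 0$ slowly enough that $\alpha\to 0$ still holds (this is fine, since the $\alpha$-rate is bounded away from $0$) but the $\beta$-rate is not degraded, using continuity of $\mathbb{P}\mapsto D(\mathbb{P}\|\mathbb{P}_{X|\mb u_2})$ and of $d_{\mathrm{TV}}$. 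Getting this $a_n$-tuning clean, and checking that $\overline{\mathcal{P}_1}$ is the relevant closure (so that the ``$\inf$ over closure = $\inf$ over $\hat{\mathcal{V}}_{\mathbf v_1}$'' step is valid — which needs either closedness of $\hat{\mathcal{V}}_{\mathbf v_1}$ or lower semicontinuity arguments on the map $\mb u\mapsto\mathbb{P}_{X|\mb u}$), is where the real work lies. Second, one should make sure the condition $\mb u_2\notin\hat{\mathcal{V}}_{\mathbf v_1}$ genuinely guarantees a strictly positive exponent; this is where the spatial-robustness assumption on the fingerprint enters, ensuring $\mathbb{P}_{X|\mb u_2}$ is separated from $\mathcal{P}_1$ and hence $D(\mathbb{P}_{X|\hat{\mathcal{V}}_{\mathbf v_1}}\|\mathbb{P}_{X|\mb u_2})>0$. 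I would handle the finite-$\mathcal{X}$ case as in Proposition~\ref{prop:1} and remark that the general case follows from the abstract Sanov theorem in~\cite[Chapter 6.]{dembo_large_2010}.
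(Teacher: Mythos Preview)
Your proposal is correct and follows essentially the same route as the paper: cast the problem as a composite-vs-simple hypothesis test, take as acceptance region the set of sequences whose empirical distribution lies in a shrinking neighborhood of the null family $\{\mathbb{P}_{X|\mb u}:\mb u\in\hat{\mathcal{V}}_{\mathbf v_1}\}$, control $\alpha$ by concentration of the empirical measure, and read off the $\beta$-exponent via a Sanov-type large-deviation bound over the acceptance region.

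The only noteworthy difference is the choice of neighborhood. The paper measures closeness by \ac{KL} divergence rather than total variation, setting
\[
C_n=\Bigl\{\mb x:\ \inf_{\mb u_1\in\hat{\mathcal{V}}_{\mathbf v_1}} D(\mathbb{Q}_{X|\mb u}\,\|\,\mathbb{P}_{X|\mb u_1})\ge \delta_n\Bigr\},\qquad \delta_n=\Omega\!\left(\tfrac{\log n}{n}\right),
\]
and then invokes the method-of-types estimate (Theorem~2.3 in Csisz\'ar--K\"orner) directly. This buys exactly the $a_n$-tuning step you flagged as the ``main obstacle'': with the KL radius set to $c\,\tfrac{\log n}{n}$ the polynomial number of types makes $\alpha\le\epsilon$ automatic, while $\delta_n\to 0$ ensures the $\beta$-exponent is not degraded and equals $\inf_{\mb u\in\hat{\mathcal{V}}_{\mathbf v_1}} D(\mathbb{P}_{X|\mb u}\|\mathbb{P}_{X|\mb u_2})$. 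Your TV-based version works too, but the KL/type-counting formulation avoids the separate continuity and closure bookkeeping you anticipated.
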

\begin{IEEEproof} See Appendix.
\end{IEEEproof}

Theorem~\ref{thm:1.1} guarantees that fingerprinting algorithms can successfully find the region in which the target node is located. The region is the modified Voronoi region. 
Those points that are closer to the boundary of the region have the smallest $D(\mathbb{P}_{X|\hat{\mathcal{V}}_{\mathbf{v}_1}}\| \mathbb{P}_{X|{\mathbf{u}_2}})$. Intuitively this means that more measurements are needed for their localization. From Theorem \ref{thm:1.1}, the importance of spatial robustness of fingerprints becomes more clear. Without robustness, the modified Voronoi region can be so different from the original Voronoi regions that the closest fingerprint in the training database corresponds to a training point that is very far from the target node and hence leading to huge error.

\subsection{On  Training Grid Selection}
\label{sec:training}
One way to design the training grid is to consider its covering radius. The notion of covering radius is borrowed from lattice literature~\cite{Erez2005}. The covering radius of a region $\mathcal{V}_{\mathbf{v}}$ is defined as follows:
$$
r_{\text{cov}}(\mathcal{V}_{\mathbf{v}})=\inf\{r: \mathcal{V}_{\mathbf{v}}\subseteq\mathbf{v}+r\mathcal{B}\}
$$
where $\mathcal{B}$ is unit-radius ball in space. The covering radius is indeed the smallest ball centering at $\mathbf{v}$ and covering the region. 
The covering radius will provide the maximum localization error for the algorithm of  Theorem \ref{thm:1.1}.

\begin{corollary}
 Consider a fingerprinting algorithm with training locations in $\Lambda$ and modified Voronoi regions $\hat{\mathcal{V}}_{\mathbf{v}}$. The maximum error $ \Delta_{\max}$ is equal to the maximum of  all covering radius for each modified Voronoi regions:
 $$
 \Delta_{\max}=\max_{\mb v\in\Lambda}r_{\text{cov}}(\hat{\mathcal{V}}_{\mathbf{v}}).
 $$
\end{corollary}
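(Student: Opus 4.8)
The plan is to reduce the statement to an elementary geometric identity, once one makes explicit what the pattern matching function furnished by Theorem~\ref{thm:1.1} actually outputs. By construction, that algorithm estimates the modified Voronoi region $\hat{\mathcal V}_{\mathbf v}$ (in the sense of Remark~\ref{rem:Voronoi}) containing the target node $\mathbf u$ and then declares $\hat{\mathbf u}=\mathbf v$, the training point associated with that region; by Theorem~\ref{thm:1.1} the correct region is identified with probability tending to one as $n\to\infty$ (and at least $1-\epsilon$ for $n$ large enough). Hence, conditioned on correct region identification, the realized error at $\mathbf u$ is exactly $\Delta(\mathbf u)=\|\mathbf u-\mathbf v\|$ whenever $\mathbf u\in\hat{\mathcal V}_{\mathbf v}$, and $\Delta_{\max}$ for this algorithm is the supremum of this quantity over the localization space $\mathcal R$.

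First I would note that the modified Voronoi regions $\{\hat{\mathcal V}_{\mathbf v}\}_{\mathbf v\in\Lambda}$ cover $\mathcal R$ — every point admits at least one training point whose fingerprint is nearest, ties being broken by any fixed rule — so that
$$
\Delta_{\max}=\sup_{\mathbf u\in\mathcal R}\Delta(\mathbf u)=\sup_{\mathbf v\in\Lambda}\ \sup_{\mathbf u\in\hat{\mathcal V}_{\mathbf v}}\|\mathbf u-\mathbf v\|.
$$
Next I would unwind the definition of the covering radius: the inclusion $\hat{\mathcal V}_{\mathbf v}\subseteq \mathbf v+r\mathcal B$ is equivalent to $\|\mathbf u-\mathbf v\|\le r$ for every $\mathbf u\in\hat{\mathcal V}_{\mathbf v}$, hence to $r\ge \sup_{\mathbf u\in\hat{\mathcal V}_{\mathbf v}}\|\mathbf u-\mathbf v\|$; therefore the set of admissible radii is exactly $[\,\sup_{\mathbf u\in\hat{\mathcal V}_{\mathbf v}}\|\mathbf u-\mathbf v\|,\infty)$ and $r_{\text{cov}}(\hat{\mathcal V}_{\mathbf v})=\sup_{\mathbf u\in\hat{\mathcal V}_{\mathbf v}}\|\mathbf u-\mathbf v\|$. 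Substituting this into the previous display, and using that $\Lambda$ is finite so that the outer supremum is attained, yields $\Delta_{\max}=\max_{\mathbf v\in\Lambda}r_{\text{cov}}(\hat{\mathcal V}_{\mathbf v})$.

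The main point that requires care is interpretational rather than computational: Theorem~\ref{thm:1.1} only guarantees correct identification of $\hat{\mathcal V}_{\mathbf v}$ in the large-$n$ regime (with missed-detection probability at most $\epsilon$), so $\Delta_{\max}$ here must be read as the maximum error achievable with probability $1-\epsilon$ for $n$ large, equivalently the error incurred once the region has been correctly detected; I would state this hypothesis explicitly at the outset of the proof. A secondary, minor issue is the treatment of boundary points belonging to several modified Voronoi regions at once: any fixed tie-breaking assignment leaves both suprema above unchanged, since such a point is a limit of points already contained in the region it is assigned to, and hence contributes nothing beyond the value of $r_{\text{cov}}$ of that region's closure. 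I expect no genuine obstacle beyond phrasing these two caveats cleanly.
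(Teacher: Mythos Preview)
Your argument is correct. The paper states this corollary without proof, treating it as an immediate consequence of the definition of covering radius together with the fact that the algorithm of Theorem~\ref{thm:1.1} outputs the training point $\mathbf v$ of the modified Voronoi region it detects; your two displayed identities (that $\Delta_{\max}=\sup_{\mathbf v}\sup_{\mathbf u\in\hat{\mathcal V}_{\mathbf v}}\|\mathbf u-\mathbf v\|$ and that $r_{\text{cov}}(\hat{\mathcal V}_{\mathbf v})=\sup_{\mathbf u\in\hat{\mathcal V}_{\mathbf v}}\|\mathbf u-\mathbf v\|$) are exactly the two steps one needs to make that implication explicit, and they are handled cleanly. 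Your added caveats about the large-$n$ interpretation and boundary tie-breaking are reasonable clarifications that the paper does not spell out, but they do not diverge from the paper's intent.
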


Localization error depends therefore on the shape of Voronoi regions and modified Voronoi regions. The choice of training locations affects  the covering radius and it is desirable in general to have a larger covering radius for a fixed number of points. For example, the hexagonal lattice with the minimum distance between locations $d$ has the covering radius of $\frac{d}{\sqrt{3}}$.  The square lattice with the same minimum distance $d$ has larger covering radius $\frac{d}{\sqrt{2}}$. 
This means that the hexagonal grid is better that square lattice with respect to its coverage radius while keeping same minimum distance between training locations. 

Robustness of fingerprints guarantees that the covering radius of modified Voronoi regions does not change that much from the original Voronoi region. Suppose that the granularity of a training grid is increased. Increasing the granularity leads to smaller modified Voronoi regions. This means that the maximum value of $D(\mathbb{P}_{X|\hat{\mathcal{V}}_{\mathbf{v}_1}}\| \mathbb{P}_{X|{\mathbf{u}_2}})$ is decreased and therefore intuitively more number of measurements is needed to achieve the same accuracy. 
In other words, using finer grids is sometimes beyond need when the number of measurements is fixed. 

\vspace{-1mm}
\section{RSS-based Fingerprinting Algorithms}
\label{rss_fingerprinting}

So far, the feature used for fingerprinting has not been explicitly specified and it is modeled as a general random variable.  In this section, the introduced framework is used for a specific fingerprinting scenario. The most common choice for the signal feature is \ac{RSS} obtained from multiple anchors. The anchors are $K$ \acp{AP} belonging to a wireless network which is also used for fingerprinting localization.  They regularly transmit their signals and it is assumed that they are not interfering with each other. 
The interference avoidance can be obtained by employing proper \ac{MAC} mechanisms, although we do not consider any specific \ac{MAC} mechanisms in this work. 
We assume that the anchors are placed in a 2-dimensional Euclidean space.  The anchor $i$ is placed at the location $\mathbf{w}_i\in\mathbb{R}^2$ and transmits with the transmission power $P^{(i)}_{T}$. 
The transmitted signal in time is presented as $x^{(i)}(t)$ and \ac{RSS} at the location ${\mathbf{u}}$ as $y^{(i)}(\mathbf{u},t)$.

\subsection{Channel Model}
In order to explicitly express the dependence of \ac{RSS} from an anchor on the distance, the relation between $x^{(i)}(t)$ and $y^{(i)}(\mathbf{u},t)$ is shown as follows  \cite{tse_fundamentals_2005}:
$$
y^{(i)}(\mathbf{u},t)=\sum a^{(i)}_j(t)x^{(i)}(t-\tau^{(i)}_j(t)) +z^{(i)}(t)
$$
where $a^{(i)}_j(t)$ and $\tau^{(i)}_j(t)$ are the channel gain and delay of $j$'th multi-path component. The variation of these parameters determine the type of environment, namely indoor and outdoor, fast fading or slowly fading, frequency selective or frequency flat. In this work, the statistical channel model is considered where $a^{(i)}_j[n]=a^{(i)}_j(nT)$ is Rayleigh distributed ($T$ is sampling period). First consider only one dominant channel tap, which is $a^{(i)}_1[n]$ and $a^{(i)}_j[n]=0$ for $j>1$.  $a^{(i)}_1[n]$'s are correlated for different $n$ and their variation depends on the coherence time of the channel. They also incorporate the path loss. Two main scenarios are considered in this work. 

First it is assumed that the received power is calculated over long period and therefore the channel variations are averaged out. In this case, $a^{(i)}_1[n]$ is assumed to be $\frac{ 1}{\|{\mathbf{w}}_i-\mathbf{u}\|^{\alpha_i/2}}$. One example is WiFi RSSI value which varies only slightly in time due to quantization noise. The fingerprint $X^{(i)}_{\mathbf{u}}$ is defined equal to the received power and it is obtained as:
\begin{equation}
X^{(i)}_{\mathbf{u}}=P^{(i)}(\mathbf{u})=\frac{P^{(i)}_T}{\|{\mathbf{w}}_i-\mathbf{u}\|^{\alpha}}+N+\mb N_i,
\label{eq:QNoiseModel}
\end{equation}
where  $\alpha$ is the path loss exponent, $N$ is the additive noise power, $\mb N_i$ is a Gaussian random variable of variance $N_i$ to account for small changes in \ac{RSS} values due to quantization noise. The fingerprint at the point ${\mathbf{u}}$ is then $\mb X_{\mb u}=(X^{(1)}_{\mathbf{u}},\dots,X^{(K)}_{\mathbf{u}}).$

In second scenario it is assumed that instantaneous received power is considered where the channel coefficient changes in each round of \ac{RSS} calculation. This means that $a^{(i)}_1[n]$ is assumed to be $\frac{ \sqrt{\mb H^{(i)}}}{\|{\mathbf{w}}_i-\mathbf{u}\|^{\alpha_i/2}}$ where $\mb H^{(i)}$ is a random variable representing multipath fading and shadowing effect. No additional quantization noise is assumed in this case. 
Fingerprint is chosen as  \ac{RSS} value and it is calculated as follows:
\begin{equation}
 X^{(i)}_{\mathbf{u}}=P^{(i)}(\mathbf{u})=\frac{ \mb H^{(i)} P^{(i)}_{T}}{\|{\mathbf{w}}_i-\mathbf{u}\|^{\alpha_i}}+N.
\label{eq:FadingModel}
 \end{equation}
A database is created by measuring \ac{RSS} at the training locations ${\mathbf{v}}\in\Lambda$.  The measurements are used for localization of a target node located at the location $\mathbf{u}\in\mathcal{R}$. At each training location $\mathbf{v}$, \ac{RSS} is measured from each anchor $i$ and the measurement is repeated for number of times.

In the following, these two scenarios are studied using the framework developed above. Based on the results of previous section, the KL-divergence  $D(\mathbb{P}_{\mb X|\mathbf{u}_1}\| \mathbb{P}_{\mb X|\mathbf{u}_2})$ is adopted as the main metric of interest. This metric is inversely related with the latency of localization. Bigger KL-divergence indicates fewer measurements required for localization. 
\subsection{Noisy \ac{RSS} Fingerprinting}

\begin{figure*}[!th]
\vspace{-4mm}
\begin{minipage}{0.67\textwidth}
\centering
\begin{subfigure}{0.48\textwidth}
\centering
\includegraphics[width=\columnwidth]{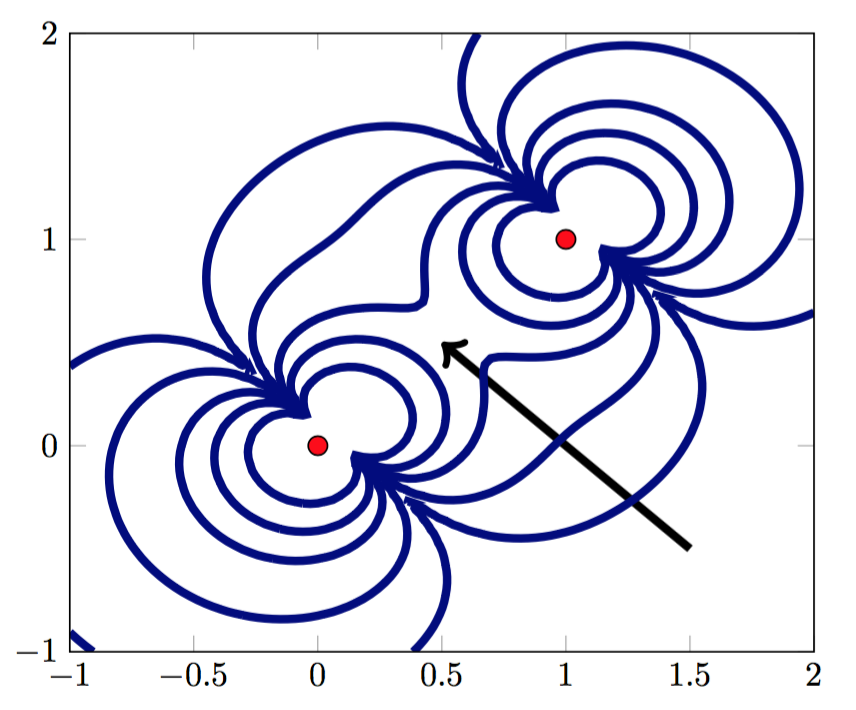}
\caption{Two APs level curves}
\end{subfigure} 
\begin{subfigure}{0.48\textwidth}
\centering
\includegraphics[width=\columnwidth]{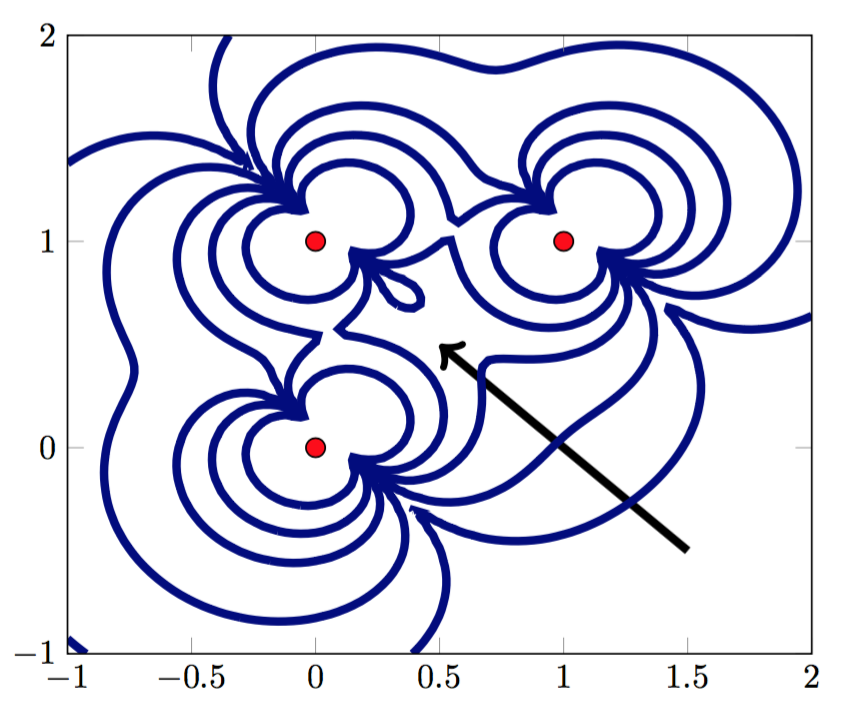}
\caption{Three APs level curves}
\end{subfigure}
\vspace{-2mm}
\caption{Level curves of \ac{KL}-divergence (localization latency)}
\label{fig:APlevels}
\vspace{-3mm}
\end{minipage}
\begin{minipage}{0.32\textwidth}
\vspace{1mm}
\includegraphics[width=\columnwidth]{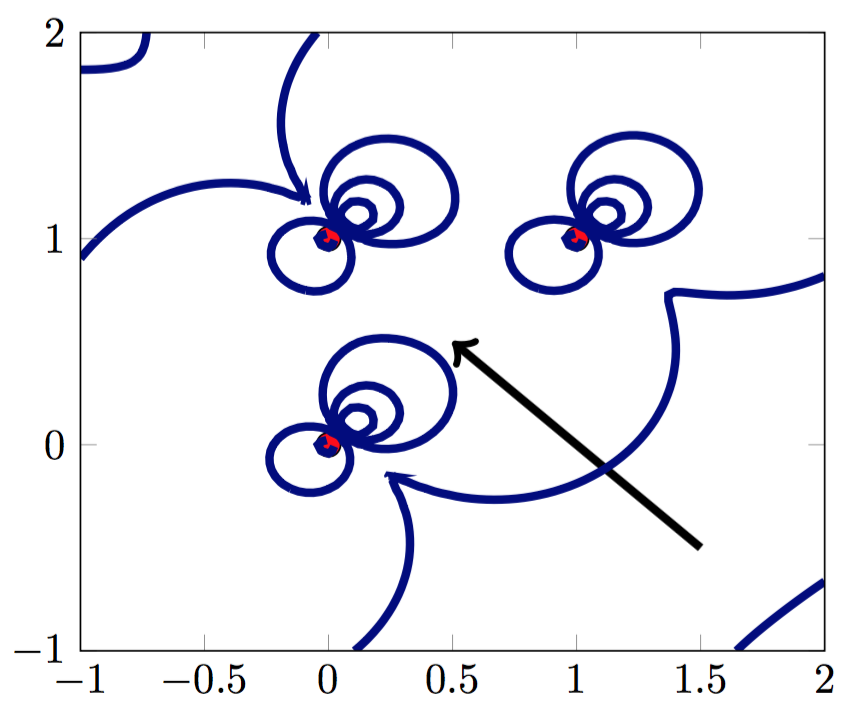}
\vspace{-7mm}
\captionsetup{justification=centering, margin=0.2cm}
\caption{Level curves of \ac{KL}-divergence (localization latency) under fading}
\label{fig:APlevelsfading}
\end{minipage}
\end{figure*}

For the choice of fingerprints \eqref{eq:QNoiseModel}, the metric $D(\mathbb{P}_{\mb X|\mathbf{u}_1}\| \mathbb{P}_{\mb X|\mathbf{u}_2})$ is \ac{KL} divergence of two normal random variable and it is evaluated as:
\small
\begin{equation}
D(\mathbb{P}_{\mb X|\mathbf{u}_1}\| \mathbb{P}_{\mb X|\mathbf{u}_2})=\sum_j\frac{(P^{(j)}_T)^2}{2N_j}\left(\frac{1}{\|{\mathbf{w}}_j-\mathbf{u}_1\|^{\alpha}}-\frac{1}{\|{\mathbf{w}}_j-\mathbf{u}_2\|^{\alpha}}\right)^2.
\label{eq:NoisyFPS}
\end{equation}
\normalsize
Note that if $D(\mathbb{P}_{\mb X|\mathbf{u}_1}\| \mathbb{P}_{\mb X|\mathbf{u}_2})$ is zero, then $\|{\mathbf{w}}_j-\mathbf{u}_2\|=\|{\mathbf{w}}_j-\mathbf{u}_1\|$ for all anchors. 
To guarantee uniqueness, the anchors should be placed in a way that if $\|{\mathbf{w}}_j-\mathbf{u}_2\|=\|{\mathbf{w}}_j-\mathbf{u}_1\|$ for all anchors, then $\mb u_1=\mb u_2$. This can be achieved simply by choosing three non-collinear anchors in $\mathbb{R}^2$. 

Let's evaluate the robustness of this fingerprint. To this purpose, it is assumed that the localization is done over $\mathbb{R}^+$ with  an anchor placed at origin and the localization space inside the interval $[0,D]$. 
\begin{proposition}
Suppose that the localization space lies in $[0,D]$. For an anchor placed at origin, \ac{RSS}-fingerprints are spatially robust. In other words, if $D( \prob_{\mb X|\mb u_1}\| \prob_{\mb X|\mb u_2}) \leq L$ then
$$
 |{ u_1- u_2}|\leq  \frac{D^{\alpha+1}}{\alpha P_T} \sqrt{2N_1 L}
 $$
 \label{prop:stab}
 \end{proposition}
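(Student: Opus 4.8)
The plan is to specialize the general KL-divergence formula \eqref{eq:NoisyFPS} to the one-anchor, one-dimensional setting and then invert the resulting relation between the divergence and the geometric distance by an elementary mean-value argument. First I would set $K=1$ and $\mathbf{w}_1=0$ in \eqref{eq:NoisyFPS}: since the localization space lies in $[0,D]\subset\mathbb{R}^+$, we have $\|\mathbf{w}_1-u\|=u$ for every location $u$, so the divergence collapses to
$$
D(\prob_{\mb X|\mb u_1}\| \prob_{\mb X|\mb u_2})=\frac{P_T^2}{2N_1}\left(\frac{1}{u_1^{\alpha}}-\frac{1}{u_2^{\alpha}}\right)^2 .
$$
Consequently the hypothesis $D(\prob_{\mb X|\mb u_1}\| \prob_{\mb X|\mb u_2})\le L$ is equivalent to
$$
\left|\frac{1}{u_1^{\alpha}}-\frac{1}{u_2^{\alpha}}\right|\le\frac{\sqrt{2N_1 L}}{P_T},
$$
with the understanding that $u_1,u_2>0$ (if either equals $0$ the divergence is infinite and the hypothesis is vacuous).

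Next I would lower-bound the left-hand side in terms of $|u_1-u_2|$. Writing $h(u)=u^{-\alpha}$, the mean value theorem gives $h(u_1)-h(u_2)=h'(\xi)(u_1-u_2)$ for some $\xi$ strictly between $u_1$ and $u_2$, hence $\xi\in(0,D]$. Since $|h'(\xi)|=\alpha\,\xi^{-\alpha-1}$ is decreasing in $\xi$ and $\xi\le D$, we obtain $|h'(\xi)|\ge\alpha D^{-\alpha-1}$, so
$$
\left|\frac{1}{u_1^{\alpha}}-\frac{1}{u_2^{\alpha}}\right|\ge\frac{\alpha}{D^{\alpha+1}}\,|u_1-u_2|.
$$
Combining the two displayed inequalities and solving for $|u_1-u_2|$ yields the claimed bound $|u_1-u_2|\le\frac{D^{\alpha+1}}{\alpha P_T}\sqrt{2N_1 L}$; in particular $s(L)=\frac{D^{\alpha+1}}{\alpha P_T}\sqrt{2N_1 L}\to 0$ as $L\to 0$, which is precisely the spatial-robustness property demanded in Section~\ref{model}.

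The computation is essentially routine; the only point that needs a little care is the degeneracy at the anchor location, where the path-loss model makes the received power, and hence the divergence to any other location, infinite. This forces the relevant part of the localization region to be bounded away from the origin: any pair $u_1,u_2$ for which the hypothesis is non-vacuous is automatically positive, so the mean-value step is legitimate. One could sharpen the constant by replacing $D$ with $\max\{u_1,u_2\}$, but the uniform version stated keeps the bound independent of the particular pair and is all that robustness requires. The same reasoning extends to several anchors: every summand in \eqref{eq:NoisyFPS} is nonnegative, so retaining just the term of a single anchor located within distance $D$ of the whole region already produces a bound of the same shape, which explains why one well-placed anchor suffices to control robustness in one dimension.
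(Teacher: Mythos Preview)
Your proof is correct and follows essentially the same approach as the paper: specialize \eqref{eq:NoisyFPS} to one anchor at the origin, apply the mean value theorem to $u\mapsto u^{-\alpha}$, and bound the intermediate point by $D$. The paper writes the intermediate point as a convex combination $\beta u_1+(1-\beta)u_2$ rather than your $\xi$, but this is purely notational; your added remarks on the degeneracy at the anchor and the possible sharpening to $\max\{u_1,u_2\}$ are welcome clarifications not present in the original.
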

\begin{IEEEproof}
 See Appendix. 
\end{IEEEproof}
This result can be easily extended to localization in general Euclidean space, however even this simple consideration is illuminating. First, if the localization space, i.e., $D$ is large, then the arbitrarily far locations from the anchor can have very small $D( \prob_{\mb X|\mb u_1}\| \prob_{\mb X|\mb u_2})$ while $|{ u_1- u_2}|$ is arbitrarily large. Therefore \ac{RSS} fingerprints are not robust in general for an unbounded localization space. Another interpretation of this observation is that the new anchor should be placed as far as possible from the first anchor to increase the overall robustness of the fingerprint. The proposition \ref{prop:stab} indicates also how far one can measure \ac{RSS} of an anchor to guarantee a certain accuracy. Moreover the robustness is improved with the transmission power of the anchor and path loss exponent. 

The relation between $|{ u_1- u_2}|$ and  $D( \prob_{\mb X| u_1}\| \prob_{\mb X| u_2})$ can be used to provide bounds on the accuracy. For instance consider a typical setting for WiFi fingerprinting. Let the localization region be a line of 10 meter, i.e., $D=10m$, transmission power be $P=100 mW$, \ac{RSS} noise be around $N_1=10^{-3}$ and the granularity of devices permit separating  fingerprints by precision of $L=10^{-3}$. Those points that have $D( \prob_{\mb X| u_1}\| \prob_{\mb X| u_2})\leq L= 0.001$ are at most 5 cm apart, which means that the \ac{FPS} can achieve at best around 5 cm geometric error. 
This provides guidelines on how the training locations should be chosen to guarantee a certain accuracy.

In general, increasing number of anchors can improve KL-divergence and thereby reduce the number of required measurements for localization to achieve certain accuracy. This is subject to proper placement of anchor points. Note that the worst performance are obtained at the points with smallest KL-divergence and therefore new anchors should be placed in a way to increase the KL-divergence exactly for those points. To see this, suppose that the goal is to be able to distinguish all points with distance more than $d$. Consider all points $\mathbf{u}_2$ and $\mathbf{u}_1$ such that $\|\mathbf{u}_2-\mathbf{u}_1\|=d$. One way to study this problem is to look at level curves of the function: 
$$
\ell(\mb u,\mb e)=\sum_j\frac{(P^{(j)}_T)^2}{2N_j}\left(\frac{1}{\|{\mathbf{w}}_j-\mathbf{u}\|^{\alpha}}-\frac{1}{\|{\mathbf{w}}_j-\mathbf{u}-\mb e\|^{\alpha}}\right)^2,
$$
where $\mb e$ is an arbitrary vector of norm $d$. Figure \ref{fig:APlevels} shows these level curves for two and three anchors and $\mb e=(0.1,0.1)$ where the arrows show the direction of increase in \ac{KL}-divergence. Changing $\mb e=(0.1,0.1)$ rotates the plots according to the new chosen vector. This does not change significantly the shape of curves far from anchors unlike those close to anchors. If the number of measurements to achieve certain accuracy is an indicator of latency, these curves show the latency of localization for distinguishing $\mb u$ and $\mb u+\mb e$. 

It can be seen that the location pairs with the same $\|\mathbf{u}_2-\mathbf{u}_1\|$ that are closer to \acp{AP} have larger difference in their fingerprints. The larger difference between fingerprints provide more robust scheme to eventual localization errors. 

It can be seen that the small value of $\ell(\mb u,\mb e)$, which indicates bad localization performance, corresponds to an oval surrounding the anchors and particularly to points between the anchors.  This observation suggests that a new anchor should be placed on those curves containing the localization area. Moreover one option to find the position of new anchors is to consider Voronoi regions of current anchors and place new anchor on the intersection of Voronoi regions. In this way, \ac{KL}-divergence is increased by installing nearby anchors to compensate the effect of far anchors. 

The expression of KL-divergence is in general complex for evaluation. However, applying mean value theorem for several variables yields that:
\begin{align*}
D(\mathbb{P}_{\mb X|\mathbf{u}_1}\| \mathbb{P}_{\mb X|\mathbf{u}_2})\leq \sum_j\frac{\alpha^2  (P^{(j)}_T)^2}{2N_j}\left(\frac{\|\mb u_1-\mb u_2\|^2}{\|\overline{\mathbf{u}}_j-\mathbf{w}_j\|^{2\alpha+2}}\right)
\end{align*}
where $\overline{\mathbf{u}}_j$ is a location on the line between $\mathbf{u}_1$ and $\mathbf{u}_2$. For enough far points from all anchors, all $\overline{\mathbf{u}}_j$ can be approximated as equal and one can use the function $\tilde{\ell}(\mb u)=\displaystyle\sum_j\frac{1}{\|\mathbf{u}-\mathbf{w}_j\|^{2\alpha+2}}$ for evaluating the effect of anchor placements on localization performance. First, for enough small or enough large path loss exponent $\alpha$, the distance between two fingerprints is arbitrarily small. Moreover the level curves of $\tilde{\ell}(.)$ approximate well that of $\ell(.)$ for far points from anchors. $\tilde{\ell}$ provides a better understanding in general since it does not depend on $\mb e$.

\vspace{-1mm}
\subsection{Fingerprinting under Fading}
\label{sec:FPSfading}

In this part, it is assumed that the channel gains are random variables. 
The randomness can be due to shadowing or multi-path fading. 
For the anchor $i$, the \ac{RSS} value is modeled as: 
$$
X^{(i)}_{\mathbf{u}}=\frac{ \mb H^{(i)}  P^{(i)}_{T}}{\|{\mathbf{w}}_i-\mathbf{u}\|^{\alpha_i}}+N.
$$
Let us again assume a single anchor scenario. 
From Theorem~\ref{thm:1}, we look at the \ac{KL}-divergence of \ac{RSS} values for different locations. Considering standard Rayleigh distribution for fading, $\mb H^{(i)}$ is exponentially distributed and we have:
$$
D(\mathbb{P}_{\mb X|\mathbf{u}_1}\| \mathbb{P}_{\mb X|\mathbf{u}_2})=\sum_j \left(\log \frac{\|{\mathbf{w}}_j-\mathbf{u}_2\|^{\alpha}}{\|{\mathbf{w}}_j-\mathbf{u}_1\|^{\alpha}} +\frac{\|{\mathbf{w}}_j-\mathbf{u}_1\|^{\alpha}}{\|{\mathbf{w}}_j-\mathbf{u}_2\|^{\alpha}}-1\right).
$$
First, we show that the fading improves the localization performance by using KL-divergence metric. To compare this with noisy case, a standard inequality $x-1-\log x\geq\frac{1}{2}(x-1)^2$ for $x<1$ can be used to show that:
\begin{equation}
\begin{split}
\log \frac{\|{\mathbf{w}}_j-\mathbf{u}_2\|^{\alpha}}{\|{\mathbf{w}}_j-\mathbf{u}_1\|^{\alpha}} +\frac{\|{\mathbf{w}}_j-\mathbf{u}_1\|^{\alpha}}{\|{\mathbf{w}}_j-\mathbf{u}_2\|^{\alpha}}-1 \geq \frac {\|{\mathbf{w}}_j-\mathbf{u}_1\|^{2\alpha}}2\left(\frac{1}{\|{\mathbf{w}}_j-\mathbf{u}_1\|^{\alpha}}-\frac{1}{\|{\mathbf{w}}_j-\mathbf{u}_2\|^{\alpha}}\right)^2,
 \label{ineq:FadingVSNoise}
\end{split}
\end{equation}
where without loss of generality, we assumed that $\|{\mathbf{w}}_j-\mathbf{u}_1\|\leq \|{\mathbf{w}}_j-\mathbf{u}_2\|$. The right hand side of the  inequality in \eqref{ineq:FadingVSNoise} shows an additional product factor $\|{\mathbf{w}}_j-\mathbf{u}_1\|^{2\alpha}$ compared to \eqref{eq:NoisyFPS} ignoring the constant factors. The inequality in \eqref{ineq:FadingVSNoise} shows that for points far from the anchors ($\|{\mathbf{w}}_j-\mathbf{u}_1\|>1$), \ac{KL}-divergence is larger in fading environments and therefore localization performance is superior. The situation is reverse for points closer to the anchors, $\|{\mathbf{w}}_j-\mathbf{u}_1\|<1$. This can also be verified using level curves of $\ell(\mb u,\mb e)$ as defined above in Figure~\ref{fig:APlevelsfading}. In other words, shadowing and fading are expected to improve the worse case performance of fingerprinting algorithms intuitively because they create more variability in the \ac{RSS} pattern of different locations. Regarding anchor placement, level curves of $\ell(\mb u,\mb e)$ suggest that the same guideline regarding Voronoi-based anchor placement holds far fading case too. 
\vspace{-1mm}

\begin{figure*}[!th]
\vspace{-2mm}
\centering
\begin{subfigure}{0.32\textwidth}
\centering
\includegraphics[width=\columnwidth]{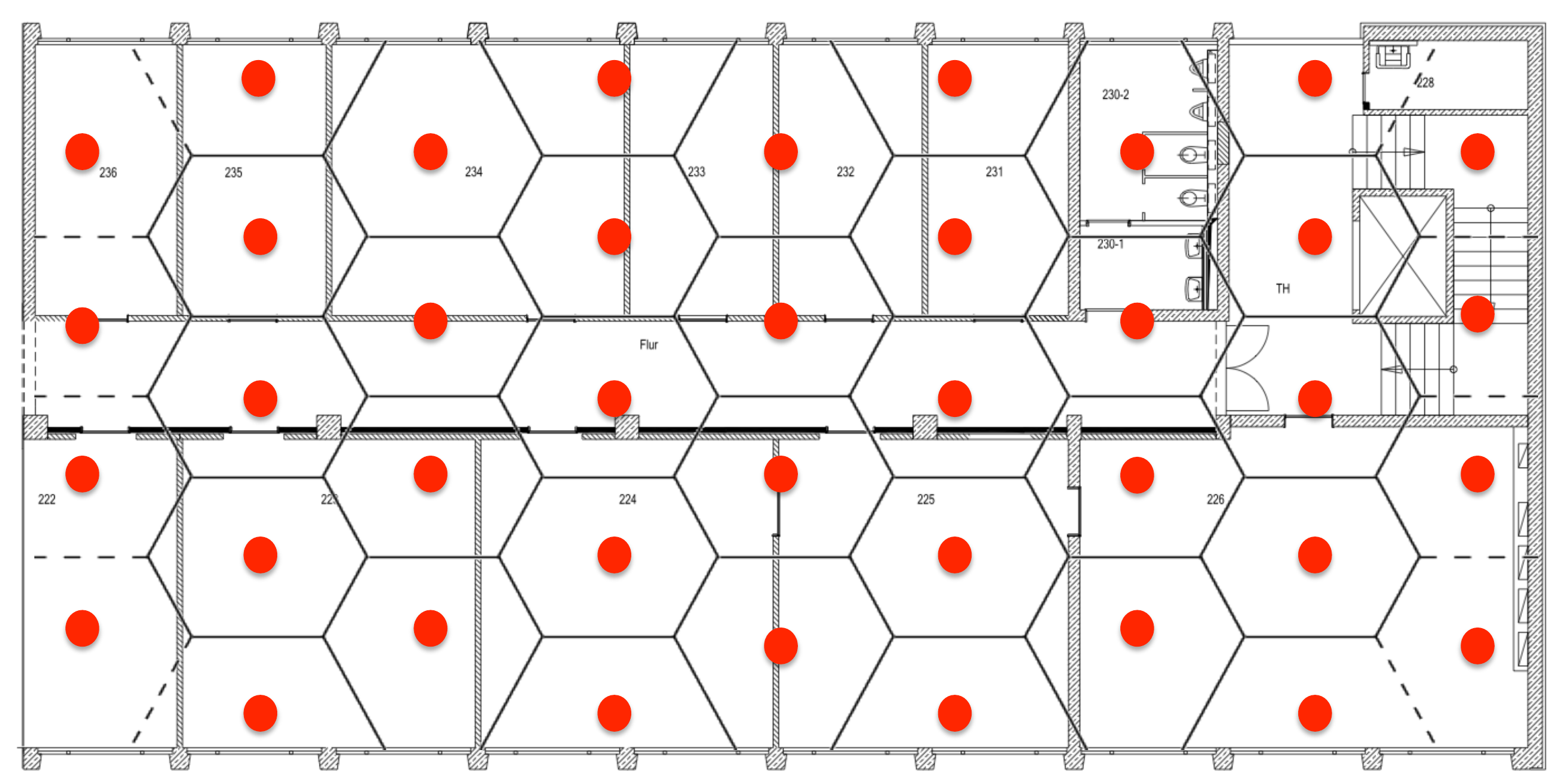}
\caption{Hexagonal training grid}
\end{subfigure} \hspace{1mm}
\begin{subfigure}{0.32\textwidth}
\centering
\includegraphics[width=\columnwidth]{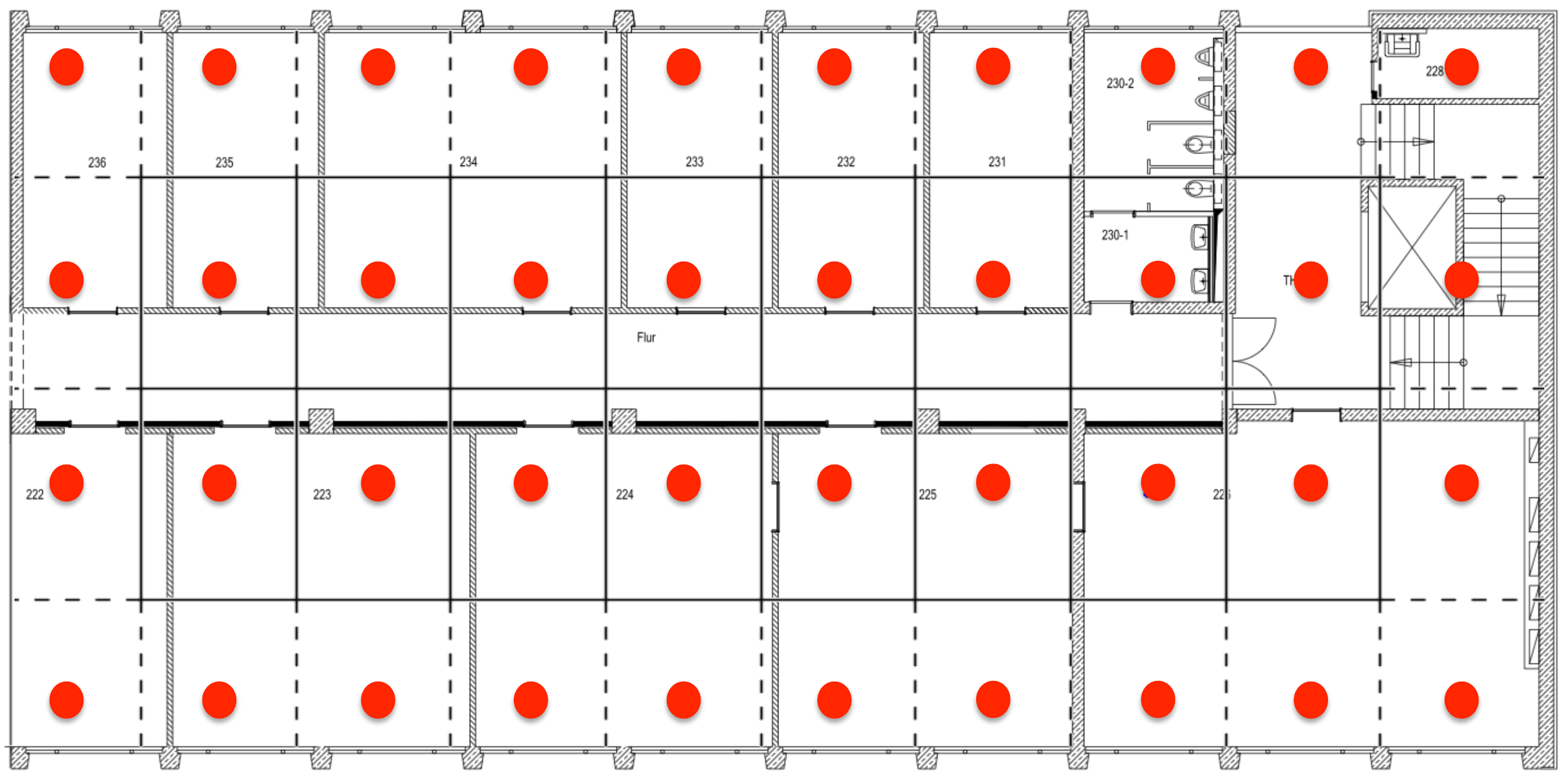}
\caption{Squared training grid}
\end{subfigure} \hspace{1mm}
\begin{subfigure}{0.32\textwidth}
\centering
\includegraphics[width=\columnwidth]{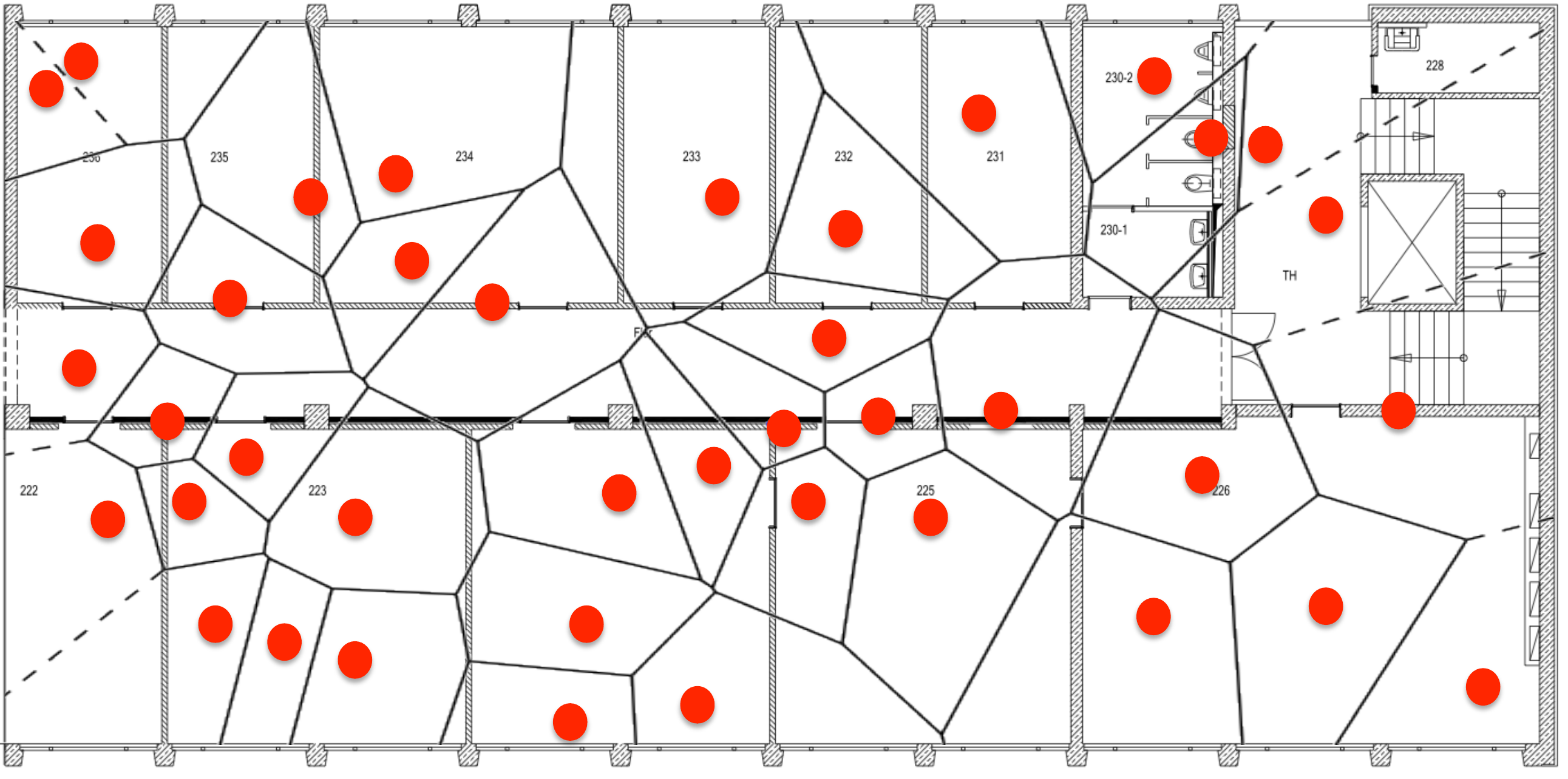}
\caption{Random training grid}
\end{subfigure}
\vspace{-2mm}
\caption{Selected training grids}
\label{fig:training_grid}
\vspace{-4mm}
\end{figure*}

\begin{figure*}[!ht]
\begin{minipage}{0.32\textwidth}
\vspace{2mm}
\includegraphics[width=\columnwidth]{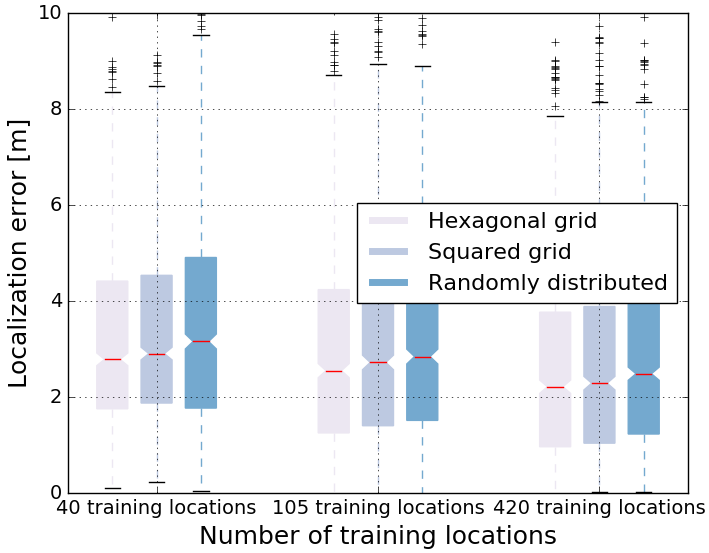}
\captionsetup{justification=centering}
\vspace{-5mm}
\caption{Localization error vs. grid selection \newline}\label{fig:training_points}
\end{minipage} \hfil
\begin{minipage}{0.31\textwidth}
\includegraphics[width=\columnwidth]{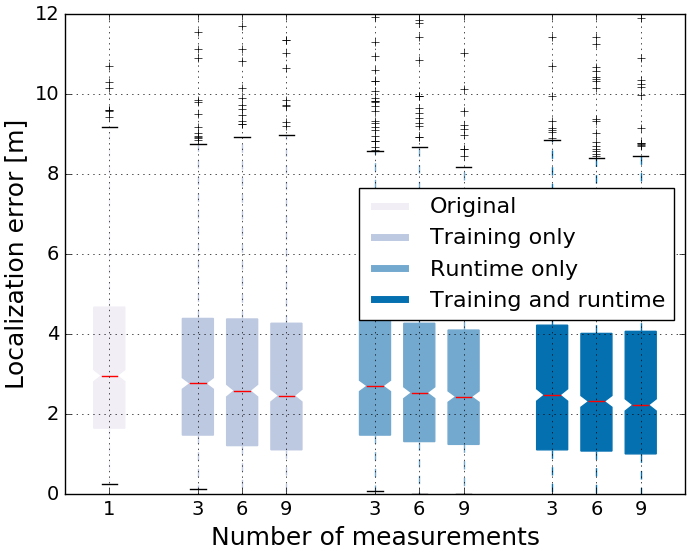}
\captionsetup{justification=centering}
\vspace{-5mm}
\caption{Localization error vs. number of measurements}\label{fig:number_meas}
\end{minipage} \hfil
\begin{minipage}{0.31\textwidth}
\includegraphics[width=\columnwidth]{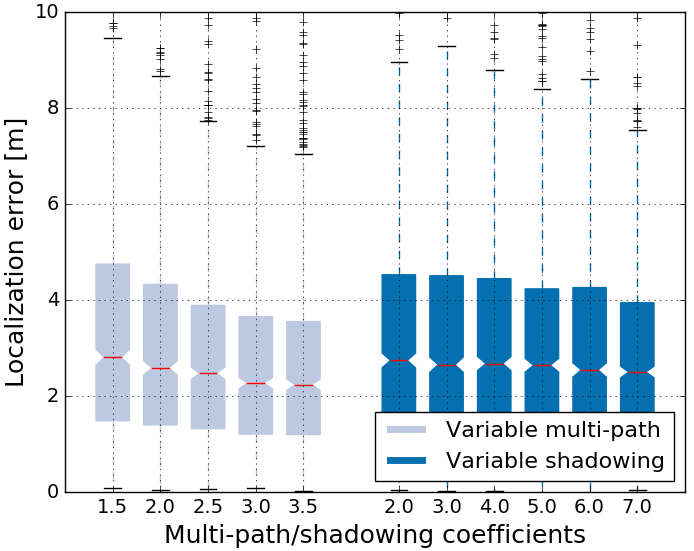}
\captionsetup{justification=centering}
\vspace{-5mm}
\caption{Localization error vs. shadowing and fading}\label{fig:shadowing_fading}
\end{minipage}
\vspace{-6mm}
\end{figure*}

\section{Evaluation}
\label{experiments}

In this section, the implications of the developed theory are examined. 
Through numerical simulation and experimental results, it is shown that the guidelines provided by the above framework are consistent in practical cases of interest. As a fingerprint at each location, the vector of average \ac{RSSI} values observed from different WiFi \acp{AP} is selected, which is a well-known and extensively used fingerprint selection method~\cite{Lemic14experimental_decomposition_of,lemic_quantile}. 
The similarity kernel is Euclidean distance between RSSI vectors, which is again a standard and often used method in fingerprinting~\cite{lemic15_open_challenge}.   

\subsection{Evaluation through Simulation}
In a simulation environment, a set of \acp{AP} related parameters, i.e. their locations and transmit powers are fixed. 
\ac{RSSI} values obtained from each \ac{AP} at a target node with an unknown location are modeled using the COST 231 multi-wall model for indoor radio propagation~\cite{borrelli2004channel}.
The applicability of the model has been demonstrated for localization purposes~\cite{caso15onthe} and the model has been previously used for localization related simulations (e.g.~\cite{lemic16towards,lemic16_enriched}).
The model accounts for the type and number of walls or obstacles in an environment, as well as for the locations of \acp{AP}.

In this model, the first attenuation effect is modeled by a well-known one-slope term that relates the received power to the distance. 
This term is tuned by the constant $l_0$, i.e. the path-loss at 1 m distance from the anchor and at the center frequency of 2.45 GHz, and the path-loss exponent $\gamma$. 
The second attenuation effect is modeled by a linear wall attenuation term. 
The number of walls in the direct path between a transmitter and a target node is counted and an attenuation contribution is assumed for each wall. 
The model outputs \ac{RSS} values from the defined \acp{AP} at a target node's location.  
A noise is then added to the derived \ac{RSS} values, drawn from a Gaussian distribution $\mathcal{N}(0,\sigma)$.
Gaussian noise is commonly introduced in indoor positioning related simulations to account for variations such as quantization error (e.g.~\cite{iwcmc}). 
For the simulation and later experimental examination environment, the {TWIST} testbed is selected~\cite{lemic_infrastructure}.
The {TWIST} testbed environment is an office building, with its outline given in Figure~\ref{fig:training}.  
In the model parameterization, measurements from the TWIST testbed were used and then a least-square fitting procedure is leveraged to minimize the cost function between the measured received power and the modeled one.

The input parameters of the model are the constant $lc$ related to the least square fitting procedure, the path-loss exponent $\gamma$, and the wall attenuation factor $l_w$. 
Additionally, a zero-mean Gaussian noise with standard deviation $\sigma$ has been added to the obtained \ac{RSS} values. 
If not explicitly stated otherwise, the parameters are chosen as  $lc=53.73$, $\gamma=1.64$, $l_w=4.51$, $\sigma=2$ for deriving our simulation results. 
The transmit power of each AP equals 20~dBm. 
For the majority of our simulation results, we defined a set of 4~\acp{AP}, with their locations indicated in Figure~\ref{fig:training}.
A target's node location has been selected randomly, its location has been estimated using the selected fingerprinting algorithm, and the localization error, i.e. an offset from the true location has been calculated.
The procedure has been repeated 10000 times and the results have been reported in a regular box-plot fashion.

First, the claims of the paper regarding the density of training locations and the selection of the training grid in Section~\ref{sec:training} are examined. 
The performance of the algorithm is evaluated for cases where the selected training grid is hexagonal, squared, and random, as shown in Figure~\ref{fig:training_grid}. 
Furthermore, the effect of the number of points in a training grid is evaluated. 
In addition to the depicted grids, each containing 40 training locations (3~m cell size for the squared grid), we also evaluated the case with 105 (2~m cell size) and 420 (1~m cell size) training locations. The results are depicted in Figure~\ref{fig:training_points}. 
As it can be seen in the figure, the usage of a hexagonal grid, i.e. the one with the largest covering radius, yields slightly better localization errors in contrast to a squared grid, while the random grid yields the worst performance. 
These results are in accordance with the developed theory that states that it is in general desirable to have a larger covering radius for a fixed number of training locations which favors the choice of hexagonal grids. Furthermore, the increase in the number of training locations improves the performance of fingerprinting algorithms. 
However, increasing the training locations by order of 10 improves only slightly the accuracy (Figure~\ref{fig:training_points}).
Both results are in accordance with the respective theory stating that an increase in the number of training locations improves the accuracy, but a training grid with very fine granularity is often not of practical benefit.

Second, the statement given in Theorem~\ref{thm:1} is evaluated concerning the number of observations at both training locations and in the runtime phase of fingerprinting. 
The training grid is fixed to a hexagonal one of 105 training locations. 
Furthermore, we increase the number of observations from 1 to 15 with a step of 5 in both training and runtime phase of fingerprinting. 
The results are presented in Figure~\ref{fig:number_meas}. 
As visible in the figure, in comparison to the basic case where only one measurement is taken in both training and runtime phase, an increase in the number of measurements generally reduces the localization errors.
Furthermore, the reduction of errors is higher in case one increases the number of measurements in both runtime and training phase, in contrast to increasing this number in one phase only. 
Both results are consistent with the respective theoretically derived conclusions stating that an increase in the number of measurements in both phases of fingerprinting optimally compensates for a small KL-divergence and in general reduces localization errors.

Third, we evaluate the statement given in Section~\ref{sec:FPSfading} stating that an increase in fading and shadowing propagation characteristics benefits the accuracy of fingerprinting. 
In the simulation model, multi-path fading is characterized by changing the path-loss exponent $\gamma$, while shadowing depends on the wall attenuation term $l_w$ . 
Therefore, to evaluate the multi-path fading effect on the performance, the path-loss exponent is increased from 1.5 to 3.5, while for characterizing the effect of shadowing we increase the wall attenuation term from 2 to 7~dBm. 
The achieved localization errors for such scenarios are given in Figure~\ref{fig:shadowing_fading}. 
The results demonstrate that an increase in both multi-fading and shadowing yields benefits for the performance of fingerprinting algorithms, which is in accordance to the derived theory.

Finally, we evaluate the statement given in Proposition~\ref{prop:stab} and in Section~\ref{rss_fingerprinting} where we consider the benefits of increasing the number of APs and properly placing them in a targeted environment. 
Starting from the basic scenario with 4~\acp{AP} (AP1,~AP2,~AP3,~AP4) depicted in Figure~\ref{fig:aps_locations}, two additional APs (AP5 and AP6 in Figure~\ref{fig:aps_locations}) are introduced in the environment, where their locations are selected either randomly or based on Voronoi vertices. 
Voronoi vertices-based selection places new APs at locations that are the farthest from the locations of existing APs. 
Furthermore, based on the locations of now 6 APs, additional 4 APs are introduced both randomly and, as
depicted in Figure~\ref{fig:aps_locations}, based on Voronoi vertices. 
The localization errors for such scenarios are depicted in Figure~\ref{fig:number_aps}. 
As visible in the figure, the increase in the number of APs generally improves the accuracy of fingerprinting. 
This is consistent with the developed theory stating that an increase in the number of APs improves KL-divergence and thereby reduces the number of required measurements for localization to achieve certain accuracy.
Secondly, in comparison to 5 different random sections both in case of 6 and 10 APs, Voronoi vertices-based selection generally achieves better performance. 
In other words, selection of APs in a way that they are the farthest from the locations of existing APs outperforms other selections.
This is consistent with the theoretically derived statement saying that the KL-divergence is increased by installing nearby APs to compensate the effect of far APs, which can be achieved by placing new APs on the intersection of Voronoi regions.

\begin{figure}[!th]
\vspace{-3mm}
\centering
\includegraphics[width=0.50\columnwidth]{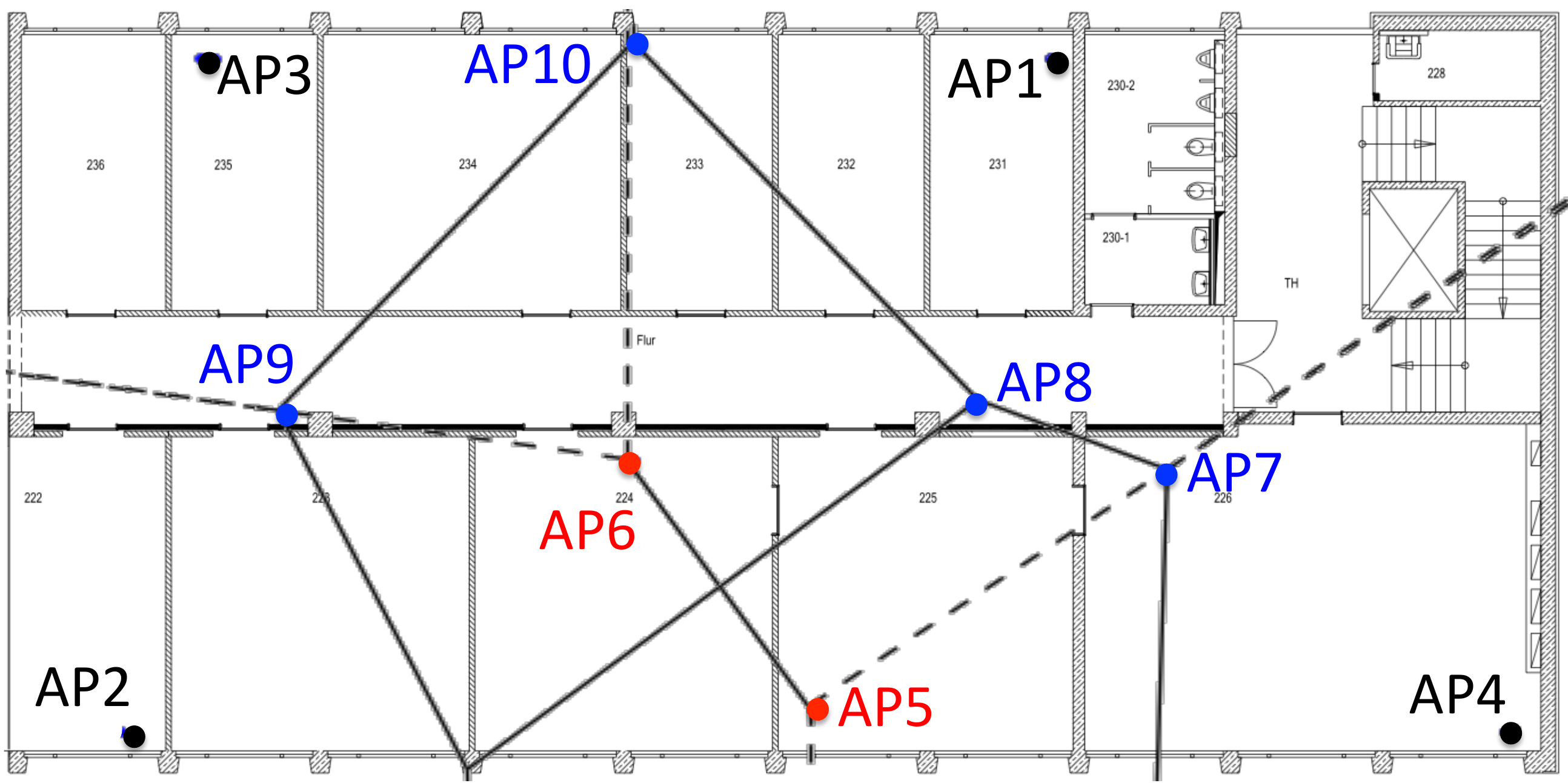}
\vspace{-1mm}
\caption{Voronoi vertices-based selection of APs' locations} 
\label{fig:aps_locations}
\vspace{-0mm}
\end{figure}

\begin{figure}[!th]
\centering
\includegraphics[width=0.5\columnwidth]{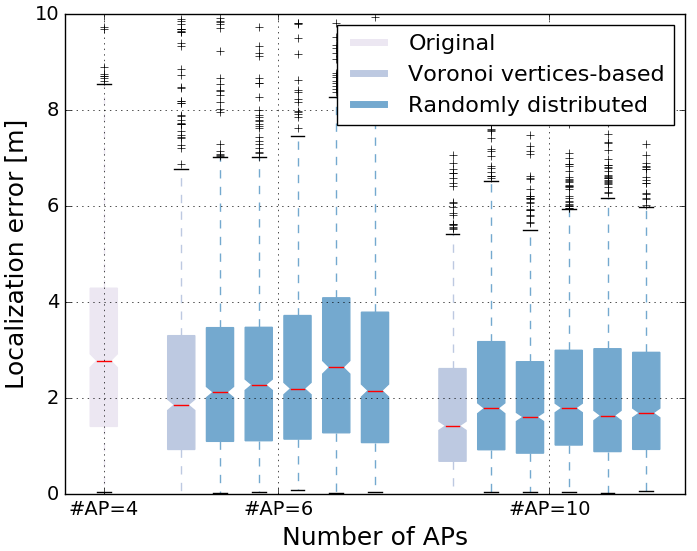}
\vspace{-1mm}
\caption{Localization error vs. number and locations of APs} 
\label{fig:number_aps}
\vspace{-3mm}
\end{figure}

\begin{figure}[!th]
\vspace{-1mm}
\centering
\begin{subfigure}{0.5\columnwidth}
\centering
\includegraphics[width=0.8\columnwidth]{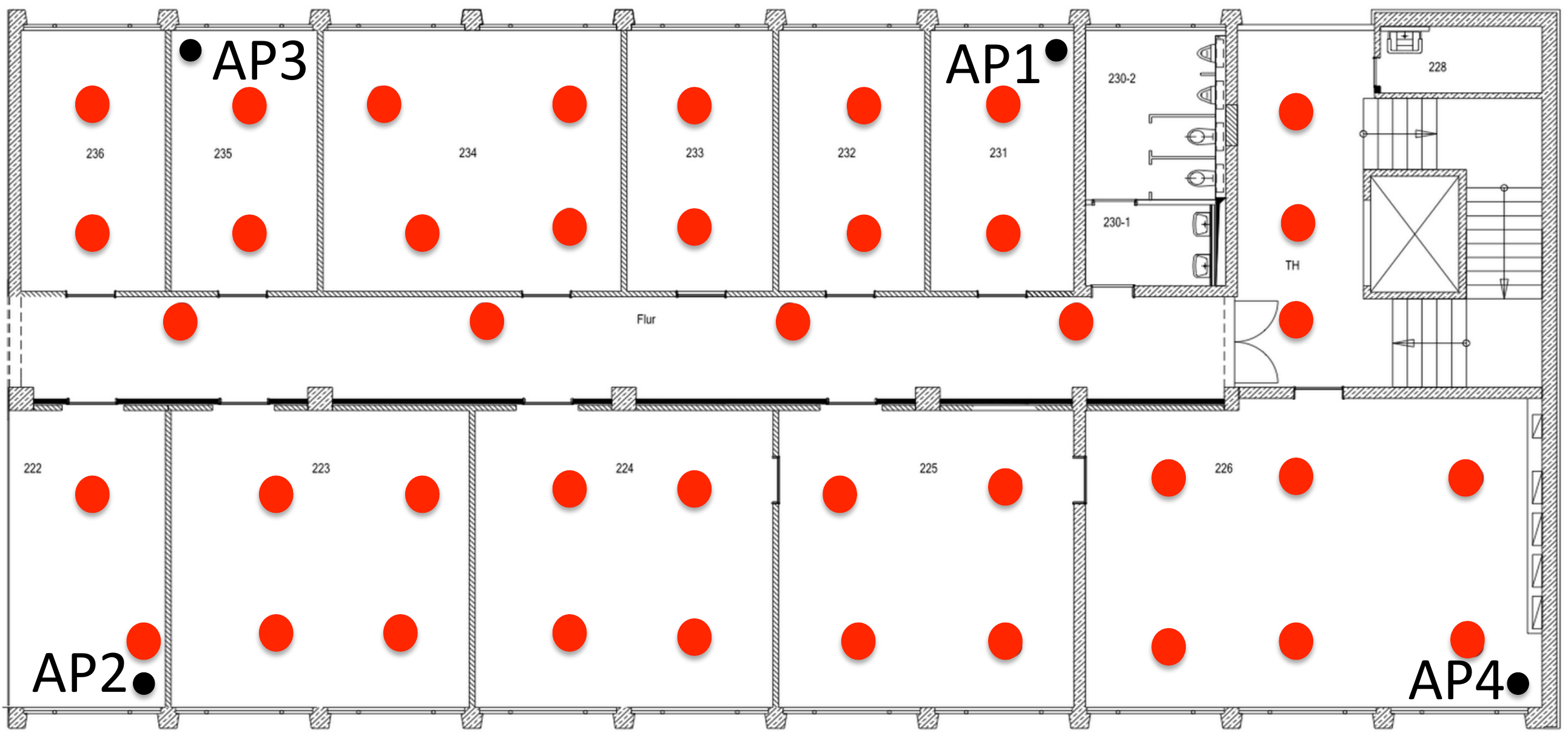}
\caption{Training locations}
\end{subfigure} \linebreak \vspace{2mm}
\begin{subfigure}{0.5\columnwidth}
\centering \vspace{2mm}
\includegraphics[width=0.8\columnwidth]{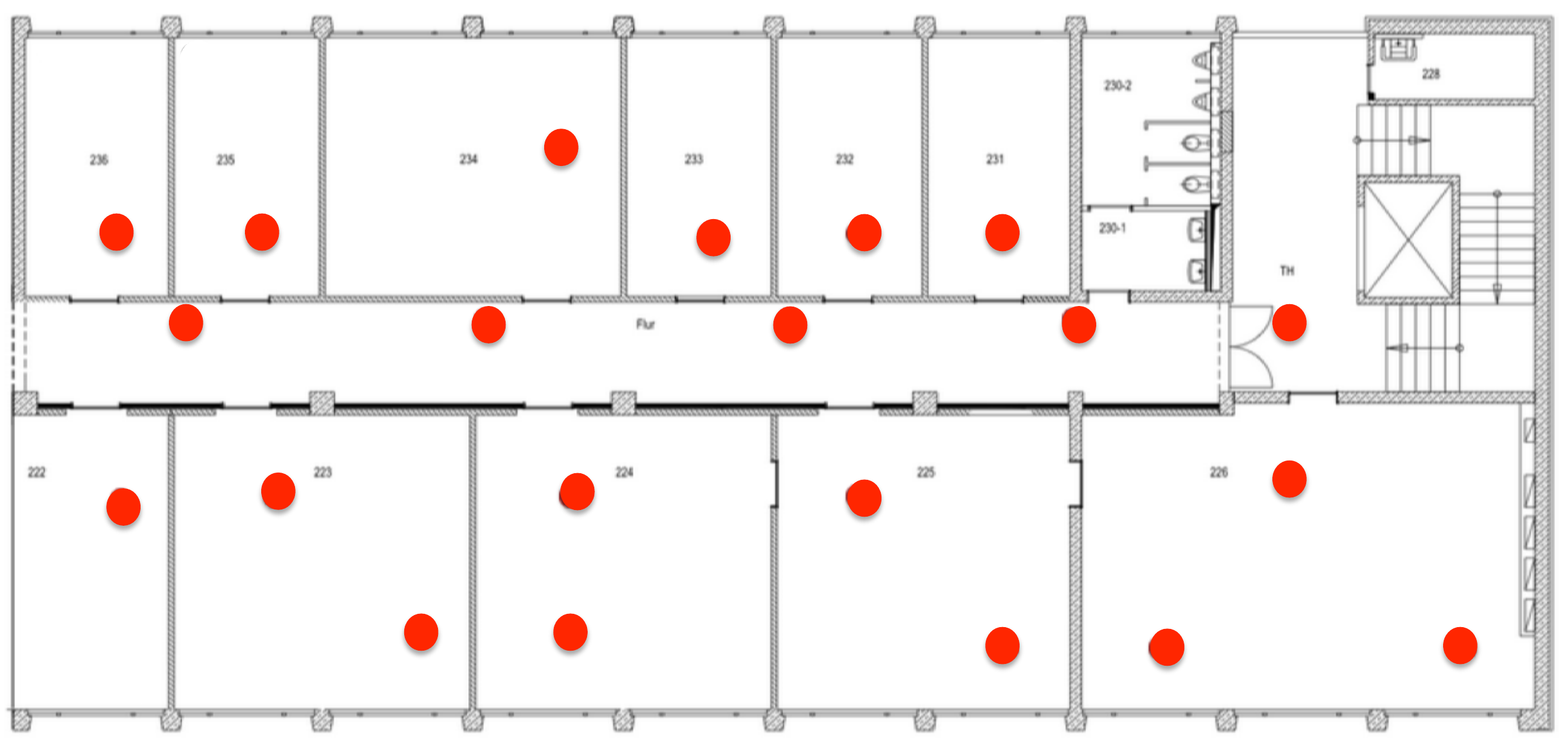}
\caption{Evaluation locations}
\end{subfigure}
\vspace{-2mm}
\caption{Training and evaluation locations}
\label{fig:training}
\vspace{-4mm}
\end{figure}

\begin{figure*}[!ht]
\centering
\begin{subfigure}{0.32\linewidth}
\centering
\includegraphics[width=\columnwidth]{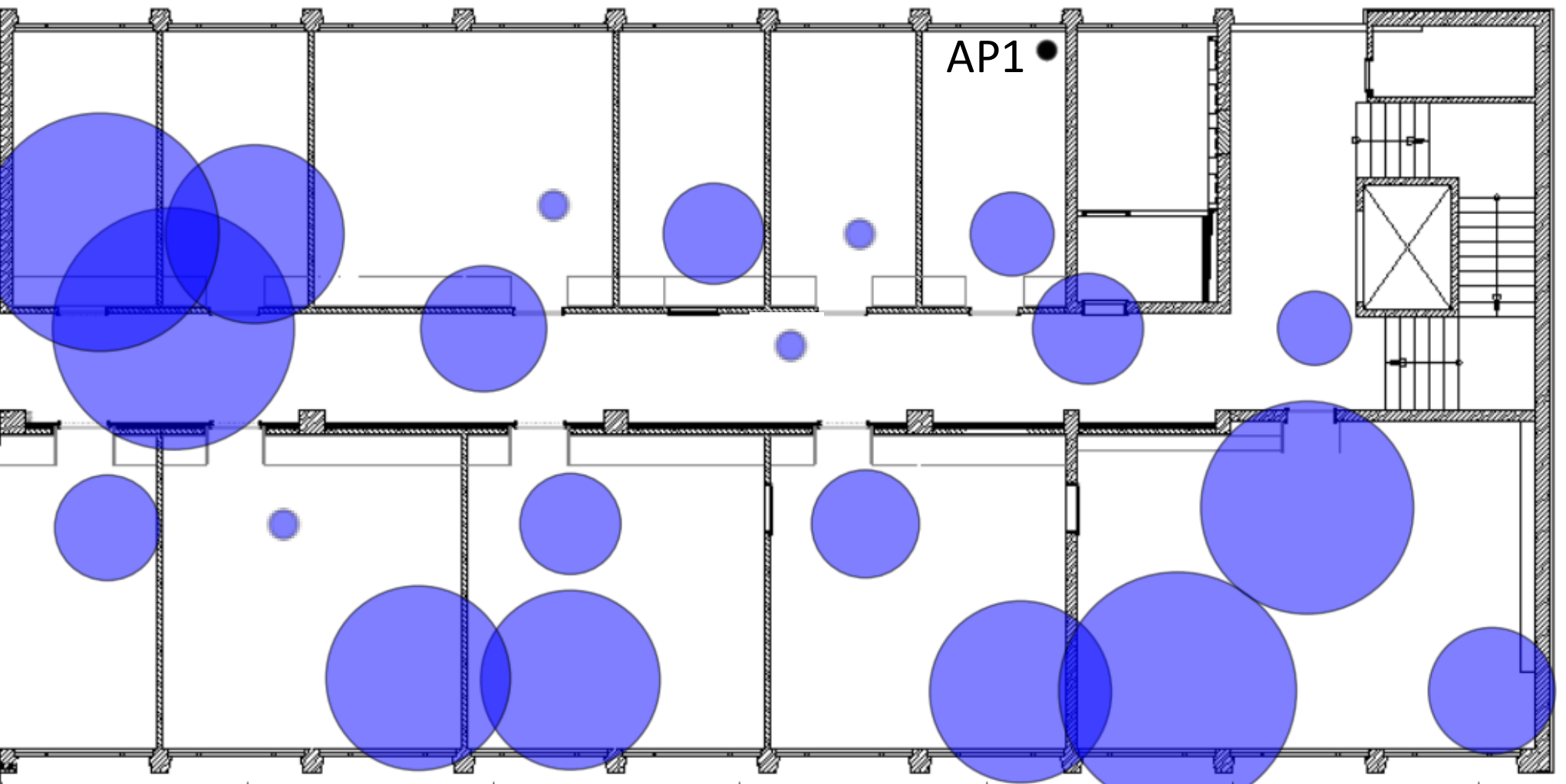}
\caption{Fingerprinting with AP1}
\label{fig:errordist1}
\end{subfigure} \hspace{1mm}
\begin{subfigure}{0.32\linewidth}
\centering
\includegraphics[width=\columnwidth]{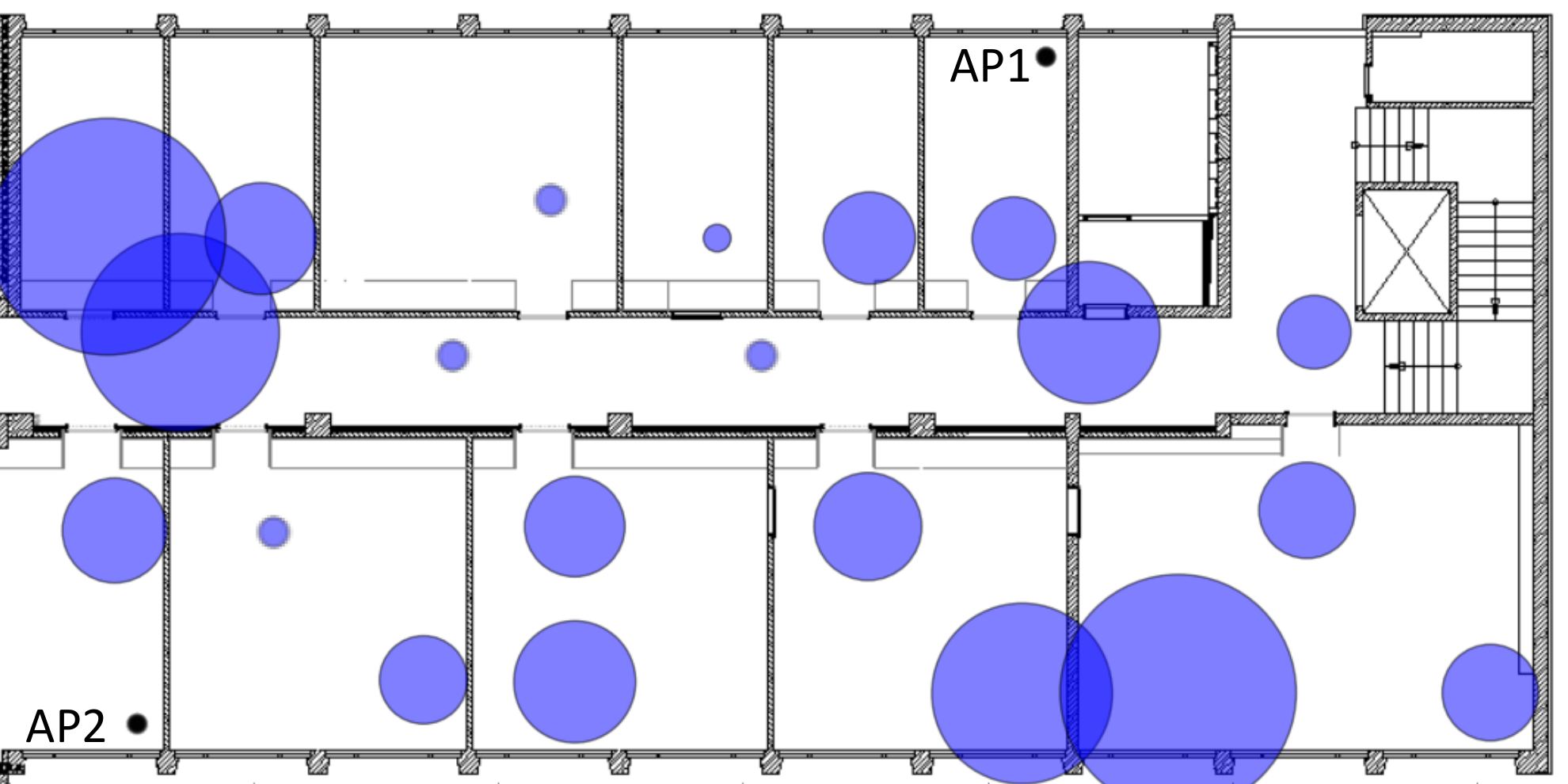}
\caption{Fingerprinting with AP1 and AP2}
\label{fig:errordist2}
\end{subfigure} \hspace{1mm}
\begin{subfigure}{0.32\linewidth}
\centering
\includegraphics[width=\columnwidth]{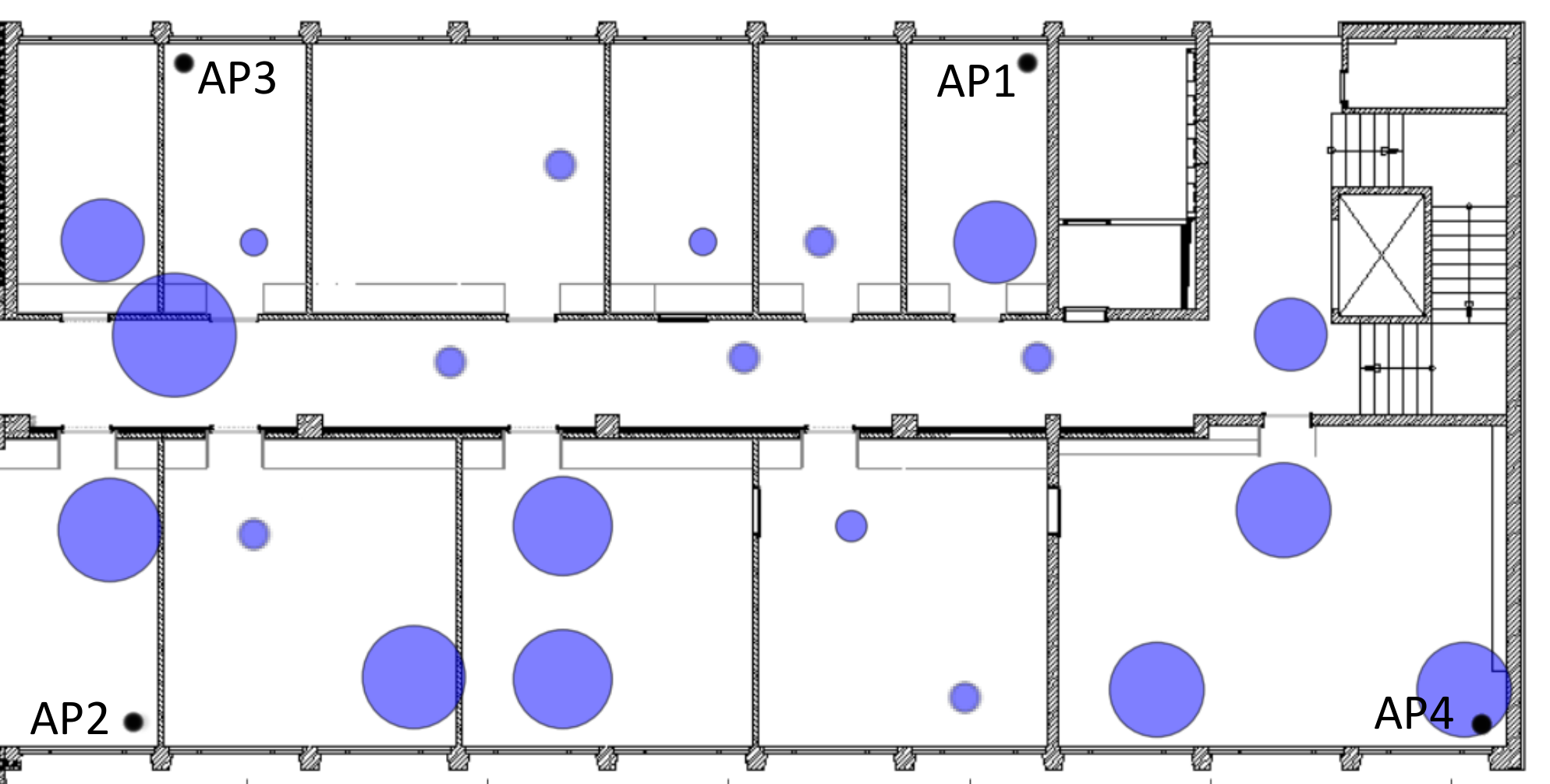}
\caption{Fingerprinting with all APs}
\label{fig:errordist3}
\end{subfigure}
\vspace{-1mm}
\caption{Spatial distribution of errors with different AP selection}
\label{fig:errordist}
\vspace{-3mm}
\end{figure*}

\subsection{Evaluation through Experimentation}
The TWIST testbed is specifically designed for indoor localization performance evaluation purposed experimentation.
It offers an automated experimentation without a need of a test-person, accurate ground-truth positioning, minimization and monitoring of external influences such as interference, and immediate calculation of performance results~\cite{lemic_infrastructure} aligned with the EVARILOS Benchmarking Handbook (EBH)~\cite{Van_Haute13the_evarilos_benchmarking}. 
The EBP provides guidelines for objective evaluation of RF-based indoor localization solutions.
In the TWIST testbed, various data-traces have been collected and offered to the public with the general goal of evaluation of RF-based indoor localization solutions~\cite{lemic15_platform_evaluation}.
These data-traces can be used in a streamlined fashion as an input to a localization solution to be evaluated. 
Based on this input location estimates are generated by the evaluated solution.
In the next step, these location estimates are compared with ground-truth locations and a set of metrics characterizing the performance of the evaluated solution in obtained.

A training database of the fingerprinting algorithm to be evaluated has been created by leveraging one of the available data-traces. 
The used data-trace has been generated by collecting 20~\ac{RSS} values from four \acp{AP} in 41 training locations, as indicated in Figure~\ref{fig:training}a).
It has been shown in the previous section that leveraging a hexagonal grid yields better performance of fingerprinting in comparison to a random or squared grid. 
Since the performance difference between hexagonal and square grids is not significant, in all currently available data-traces the measurements were collected in a squared-like fashion since this was practically more convenient.  
Finally, the evaluation locations used in the evaluation are also shown in Figure~\ref{fig:training}, with their selection based on the guidelines from the EBP.

First we evaluate the claim saying that the main benefit of using multiple \acp{AP} is the reduction of the effect of far APs.
In other words, the localization errors increase with the increase in the distance between an \ac{AP} and a target node.  
Figure~\ref{fig:errordist} presents spatial distribution of errors for three scenarios, each using different number of \acp{AP} for localization. 
As visible in Figure~\ref{fig:errordist1}, those locations farther away from AP1 tend to have larger errors. 
This is because the inverse relation of RSS values with distance such that those points closer to AP1 have a finer RSS granularity. 
It is interesting to see that fingerprinting localization algorithm performs acceptable in indoor environment even with one anchor due to shadowing and multi-path effects.
Furthermore, the \acp{AP} are added accordance to the guidelines discussed in the paper, i.e. first AP2 is deployed at the farthest location from AP1 following Voronoi-based deployment guideline. 
It can be seen in Figure~\ref{fig:errordist2} that such placement mainly decreases the errors at locations close to AP2. 
Second, in Figure~\ref{fig:errordist3}, four APs are deployed at four corners of the testbed, which again improves the localization errors. 
Both average and maximum geometric error is improved in this way as it can be seen in Figure~\ref{fig:apnr}.

\begin{figure}[!th]
\vspace{-2mm}
\centering
\includegraphics[width=0.6\columnwidth]{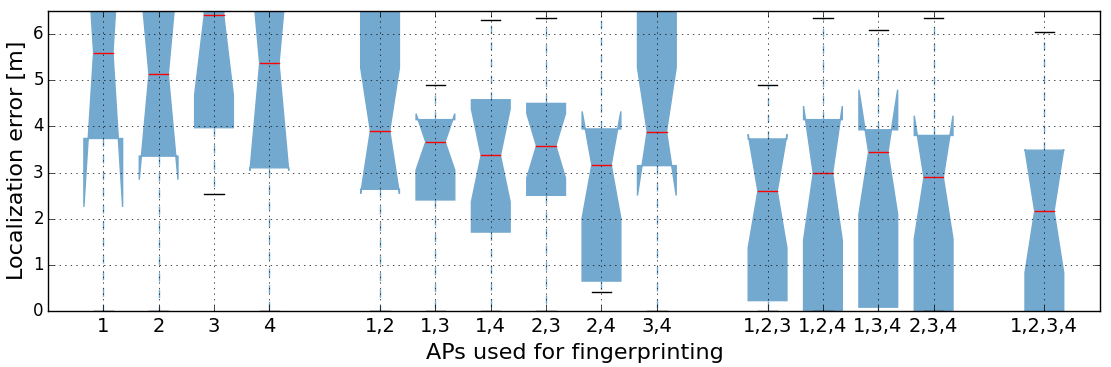}
\vspace{-6mm}
\caption{Localization error vs. number of APs} 
\label{fig:apnr}
\vspace{-5mm}
\end{figure}
\vspace{-1mm}
\section{Conclusion}
\label{conclusion}

Fingerprinting algorithms are generally multi-faceted and complex, hence it is hard to directly study the effect of their operational blocks and environmental parameters on their performance. 
In this work, a theoretical framework is introduced for analysis of fingerprinting algorithms with the goal of providing guidelines on their design and performance analysis. 
The performance of fingerprinting is shown to depend on the used fingerprinting feature and the dependence of this feature on the location in a fingerprinting space. 
A connection between fingerprinting and hypothesis testing problem was established. 
It was shown that the accuracy of fingerprinting algorithms is related to \ac{KL} divergence between probability distribution of the selected feature at two locations. It was suggested that \ac{KL} divergence can be used as a performance metric embedding both accuracy and latency of fingerprinting localizations.
The effect of the number of measurements at each location, as well as the effect of the training grid constellations on the peformance was discussed.
This framework was instantiated for RSS-based fingerprinting algorithms where the consistency of the introduced theoretical framework was examined. It has been shown that shadowing and fading act in favor of fingerprinting algorithms by creating more variability in fingerprints of different locations. Numerical simulations and experimental study confirm the claims of the theory. The framework is promising for considering various research problems in fingerprinting scenarios.

\vspace{-1mm}
\section*{Appendix}

\subsection{Proof of Theorem \ref{thm:1}}

Consider two points $\mathbf{u}_1$ and $\mathbf{u}_2$ in localization space. The theorem bounds missed detection and false alarm probabilities in a hypothesis testing scenario. The equivalent hypothesis testing problem is Stein's lemma and the proof is well known (for instance ~\cite{ dembo_large_2010, Csiszar1982}). Nonetheless the sketch of proof is provided here. Let the probabilities $\mathbb{P}_{X|\mathbf{u}_1}$ and $\mathbb{P}_{X|\mathbf{u}_2}$ be known. Consider the following set, called  \textit{typical} set:

\small
\begin{equation}
\begin{split}
 &A_{\epsilon}^n(X)=  \left\{\mb x\in\mathcal{X}^n: \left|\displaystyle\frac{1}{n}\sum_{i=1}^n\log\frac{\mathbb{P}_{X|\mathbf{u}_2}(x_i)}{\mathbb{P}_{X|\mathbf{u}_1}(x_i)}+ \right. \right. 
\left. \left. \vphantom{\frac{1}{n}} D(\mathbb{P}_{X|\mathbf{u}_1}\|\mathbb{P}_{X|\mathbf{u}_2})\right|\leq \epsilon\right\}.
\label{def:typset}
\end{split}
\end{equation}
\normalsize

Suppose that the fingerprint is constructed at $\mb u_1$. The law of large numbers implies that the term $\displaystyle\frac{1}{n}\sum_{i=1}^n\log\frac{\mathbb{P}_{X|\mathbf{u}_2}(x_i)}{\mathbb{P}_{X|\mathbf{u}_1}(x_i)}$ converges in probability (and also almost surely) to the expected value of $\frac{\mathbb{P}_{X|\mathbf{u}_2}(x)}{\mathbb{P}_{X|\mathbf{u}_1}(x)}$ as $n$ goes to infinity, which is $-D(\mathbb{P}_{X|\mathbf{u}_1}\|\mathbb{P}_{X|\mathbf{u}_2})$. 
Therefore for large enough $n$, it can be seen that:
\begin{equation}
\alpha(\mathbf{u}_1,\mathbf{u}_2)=\mathbb{P}_{\mb X|\mathbf{u}_1}(\mb X \notin A_{\epsilon}^n(X)) \leq \epsilon.
\end{equation}
This means that if the fingerprint is recorded at location $\mb u_1$, it will belong to $A_{\epsilon}^n(X)$ with high probability and hence it can act as a decision region for  $\mathbb{P}_{X|\mathbf{u}_1}$. The preceding inequality guarantees that the correct identification with probability bigger than $1-\epsilon$. It remains to show that if the fingerprint is recorded at $\mb u_2$, it will not belong to $A_{\epsilon}^n(X)$. To this purpose,  $\beta(\mathbf{u}_1,\mathbf{u}_2)$ should be bounded. Suppose that the samples are obtained at $\mathbf{u}_2$: 

\begin{equation}
\begin{split} 
\beta(\mathbf{u}_1,\mathbf{u}_2)&=\mathbb{P}_{\mb X|\mathbf{u}_2}(\mb X\in A_{\epsilon}^n(X)) = \displaystyle\sum_{\mb x\in A_{\epsilon}^n(X)}\mathbb{P}_{\mb X|\mathbf{u}_2}(\mb x) \\
&\stackrel{(a)}{\leq}\displaystyle\sum_{\mb x\in A_{\epsilon}^n(X)}\mathbb{P}_{\mb X|\mathbf{u}_1}(\mb x)\exp(-n(D(\mathbb{P}_{X|\mathbf{u}_1}\|\mathbb{P}_{X|\mathbf{u}_2})-\epsilon))\\
&\displaystyle\leq \exp(-n(D(\mathbb{P}_{X|\mathbf{u}_1}\|\mathbb{P}_{X|\mathbf{u}_2})-\epsilon)).
\end{split}
\end{equation}
where $(a)$ follows from the definition \eqref{def:typset}. This shows that $\beta(\mathbf{u}_1,\mathbf{u}_2)$ tends to zero as $n\to\infty$ and therefore $A_{\epsilon}^n(X)$ is good decision region for distinguishing $\mb u_1$ and $\mb u_2$. Note that $\beta(\mathbf{u}_1,\mathbf{u}_2)$ converges to zero exponentially with $n$. It is possible to find this exponent. Having an upper bound on $\beta(\mathbf{u}_1,\mathbf{u}_2)$, the inner bound is obtained as follows:
\begin{equation}
\begin{split}
\beta(\mathbf{u}_1,\mathbf{u}_2)&=\mathbb{P}_{\mb X|\mathbf{u}_2}(\mb X\in A_{\epsilon}^n(X)) = \displaystyle\sum_{\mb x\in A_{\epsilon}^n(X)}\mathbb{P}_{\mb X|\mathbf{u}_2}(\mb x)  \\
&\displaystyle\geq\sum_{\mb x\in A_{\epsilon}^n(X)}\mathbb{P}_{\mb X|\mathbf{u}_1}(\mb x)\exp(-n(D(\mathbb{P}_{X|\mathbf{u}_1}\|\mathbb{P}_{X|\mathbf{u}_2})+\epsilon))\\
&\displaystyle= \alpha(\mathbf{u}_1,\mathbf{u}_2)\exp(-n(D(\mathbb{P}_{X|\mathbf{u}_1}\|\mathbb{P}_{X|\mathbf{u}_2})+\epsilon))\\
&\displaystyle\geq (1-\epsilon)\exp(-n(D(\mathbb{P}_{X|\mathbf{u}_1}\|\mathbb{P}_{X|\mathbf{u}_2})+\epsilon)).
\end{split}
\end{equation}
By taking the logarithm from inner and upper bounds on $\beta(\mathbf{u}_1,\mathbf{u}_2)$ and tending $n$ to infinity, the exponent is proved to be $D(\mathbb{P}_{X|\mathbf{u}_1}\|\mathbb{P}_{X|\mathbf{u}_2})$.

\subsection{Proof of Theorem \ref{thm:1.1}}
Consider the set $C_n$, called the critical region, defined as follows:
$$
C_n=\left\{\mb x: \inf  \inf_{\mb u_1 \in \hat{\mathcal{V}}_{\mathbf{v}_1}} D(\mathbb{Q}_{X|\mb u}\| \mathbb{P}_{X|{\mathbf{u}_1}}) \geq \delta_n \right\}
$$
where $\mathbb{Q}_{X|\mb u}$ is the empirical distribution of $\mb x$ measured at $\mb u$ and $\delta_n=\Omega(\frac{\log n}{n})$. Then using Theorem 2.3 in \cite{csiszar_information_2004}, it can be seen that if $\mb X$ follows the distribution  $\mathbb{Q}_{X|\mb u}$ for $\mb u\in \hat{\mathcal{V}}_{\mathbf{v}_1}$, then $\Pr(\mb X\in C_n)\leq \epsilon$. Using the same theorem, the second error is shown to decay exponentially with the exponent indicated in the theorem. So  $(C_n)^s$ can be a decision region for finding the closest fingerprint.

\subsection{Proof of Proposition \ref{prop:stab}}

Suppose that $D(\mathbb{P}_{\mb X|\mathbf{u}_1}\| \mathbb{P}_{\mb X|\mathbf{u}_2})\leq L$. Using \eqref{eq:QNoiseModel} for an anchor in origin, and the points at $u_1>0$ and $u_2>0$, we have:
$$
D(\mathbb{P}_{\mb X|\mathbf{u}_1}\| \mathbb{P}_{\mb X|\mathbf{u}_2})=\frac{P^2_T}{2N_1}\left(\frac{1}{{u}_1^{\alpha}}-\frac{1}{{u}_2^{\alpha}}\right)^2.
$$
A simple usage of mean value theorem implies that for some $\beta\in[0,1]$:
$$
\left|\frac{1}{{u_1}^\alpha}-\frac{1}{{u_2}^\alpha}\right |= |{ u_1- u_2}| \left| \frac{\alpha}{({\beta  u_1+(1-\beta) u_2})^{\alpha+1}}\right|.
$$
For this case if $D( \prob_{\mb X|\mb u_1}\| \prob_{\mb X|\mb u_2}) = L$, then:
$$
|{ u_1- u_2}|^2\leq  \frac{L}{\kappa} {({\beta  u_1+(1-\beta) u_2})^{2\alpha+2}}
$$
where $\kappa=\frac{\alpha^2P^2_T}{2N_1}$. 
Since the localization space is bounded inside $[0,D]$ then $\beta  u_1+(1-\beta) u_2\leq D$ and therefore:
$$
 |{ u_1- u_2}|\leq \frac{D^{\alpha+1}}{\alpha P_T} \sqrt{2N_1 L}.
$$

\vspace{-4mm}
\bibliographystyle{IEEEtran}
{\footnotesize \bibliography{biblio}}

\end{document}